\documentclass{article}
\usepackage{authblk}
\usepackage[square]{natbib}
\usepackage{yhmath}
\usepackage{mathtools, amsmath, amsthm, amssymb, amsbsy, amsfonts, amscd, stmaryrd}
\usepackage{euscript}
\usepackage{bm}
\usepackage{xcolor, soul}
\usepackage{empheq}
\usepackage{rotating}
\usepackage{enumerate}
\usepackage{enumitem}
\usepackage{bbold}
\usepackage{ftnxtra}
\usepackage{fnpos}
\usepackage{appendix}
\usepackage{graphicx}
\graphicspath{{Figures/}}
\usepackage{pstool}
\usepackage{psfrag}
\usepackage{tikz-cd}
\usepackage{float}
\usepackage{comment}
\numberwithin{equation}{section}
\theoremstyle{plain}	
 \newtheorem{thm}{Theorem}[section]
 
 \newtheorem{lem}[thm]{Lemma}
 \newtheorem{prop}[thm]{Proposition}
\theoremstyle{definition}	
 
 \newtheorem{remark}{Remark}[section]
 
\usepackage{caption}
\usepackage{hyperref}
\hypersetup{colorlinks=true, linkcolor=blue}
\hypersetup{colorlinks=true,citecolor=blue}
\DeclareMathAlphabet{\mathpzc}{OT1}{pzc}{m}{it}
\usepackage{mathrsfs}
\definecolor{lighter_purple_mathematica}{rgb}{0.6666666666,0.33333333333,0.666666666666}
\makeatletter
\newsavebox{\@brx}
\newcommand{\llangle}[1][]{\savebox{\@brx}{$\m@th{#1\langle}$}%
 \mathopen{\copy\@brx\mkern2mu\kern-0.9\wd\@brx\usebox{\@brx}}}
\newcommand{\rrangle}[1][]{\savebox{\@brx}{$\m@th{#1\rangle}$}%
 \mathclose{\copy\@brx\mkern2mu\kern-0.9\wd\@brx\usebox{\@brx}}}%
\let\oldabs\abs
\def\abs{\@ifstar{\oldabs}{\oldabs*}}
\makeatother

\newcommand{\M}{\mathbf M}
\newcommand{\cM}{\mathrm M}
\newcommand{\m}{\mathbf m}

\newcommand{\Lt}{\mathbf L}
\newcommand{\Lo}{\mathring{\Lt}}
\newcommand{\cL}{\mathrm L}
\newcommand{\cLo}{\mathring{\cL}}
\newcommand{\Lv}{\accentset{v}{\Lt}}
\newcommand{\cLv}{\accentset{v}{\cL}}
\newcommand{\Li}{\accentset{i}{\Lt}}
\newcommand{\cLi}{\accentset{i}{\cL}}
\newcommand{\g}{\mathbf g}
\newcommand{\G}{\mathbf G}
\newcommand{\Go}{\mathring{\G}}
\newcommand{\Gv}{\accentset{v}{\G}}
\newcommand{\Gi}{\accentset{i}{\G}}
\newcommand{\F}{\mathbf{F}}
\newcommand{\C}{\mathbf{C}}
\newcommand{\fb}{\mathbf{b}}
\newcommand{\ic}{\mathbf{c}}
\newcommand{\cg}{\mathrm g}
\newcommand{\cG}{\mathrm G}
\newcommand{\cGo}{\mathring{\cG}}
\newcommand{\cF}{\mathrm F}
\newcommand{\cP}{\mathrm P}
\newcommand{\cC}{\mathrm{C}}
\usepackage{accents}
\newcommand{\Fe}{\accentset{e}{\F}}
\newcommand{\cFe}{\accentset{e}{\mathrm{F}}}
\newcommand{\Fa}{\accentset{a}{\F}}
\newcommand{\cFa}{\accentset{a}{\mathrm{F}}}
\newcommand{\Fv}{\accentset{v}{\F}}
\newcommand{\cFv}{\accentset{v}{\mathrm{F}}}
\newcommand{\Fve}{\accentset{ve}{\F}}
\newcommand{\cFve}{\accentset{ve}{\mathrm{F}}}
\newcommand{\Ce}{\accentset{e}{\C}}
\newcommand{\cCe}{\accentset{e}{\mathrm{C}}}
\newcommand{\Be}{\accentset{e}{\mathbf{B}}}
\newcommand{\cBe}{\accentset{e}{\mathrm{B}}}
\newcommand{\ce}{\accentset{e}{\mathbf{c}}}
\newcommand{\cce}{\accentset{e}{\mathrm{c}}}
\newcommand{\be}{\accentset{e}{\mathbf{b}}}
\newcommand{\cbe}{\accentset{e}{\mathrm{b}}}

\newcommand{\Lambdav}{\accentset{v}{\boldsymbol\Lambda}}
\newcommand{\Lambdai}{\accentset{i}{\boldsymbol\Lambda}}
\newcommand{\sharpo}{{\mathring{\sharp}}}
\newcommand{\flato}{\mathring{\flat}}

\usepackage[all,cmtip]{xy}
\begin{document}
\title{\textbf{On Physical Components of Tensors \\in Elasticity and Inelasticity
}}
\author[1]{Souhayl Sadik\thanks{E-mail: sosa@mpe.au.dk}}
\author[2,3]{Arash Yavari\thanks{E-mail: arash.yavari@ce.gatech.edu}}
\affil[1]{\small \textit{Department of Mechanical and Production Engineering, Aarhus University, 8000~Aarhus~C, Denmark}}
\affil[2]{\small \textit{School of Civil and Environmental Engineering, Georgia Institute of Technology, Atlanta, GA 30332, USA}}
\affil[3]{\small \textit{The George W. Woodruff School of Mechanical Engineering, Georgia Institute of Technology, Atlanta, GA 30332, USA}}
\maketitle
%-----------------------------
\begin{abstract}
Widely used in mechanics and mathematical physics, physical components remove the inherent coordinate-dependent scaling in curvilinear coordinates, yielding components of consistent physical dimension. In orthogonal coordinates, they are constructed by normalizing the coordinate frame and coframe; in general coordinates, however, their construction requires additional, non-trivial choices. In this paper, we extract physical components of arbitrary tensors on arbitrary Riemannian manifolds by orthonormalization of the coordinate frame. We further formulate a general normalization framework distinguishing three requirements: dimensional consistency, dual frame-coframe compatibility, and unit normalization. We show that dimensional consistency alone leaves independent general linear gauge freedoms for the contravariant and covariant components. Further requiring dual compatibility locks these into a single general linear gauge; and a dual-compatible frame and coframe are both of unit length if and only if they are orthonormal. Thus, the choice of orthonormal transformations emerges as the only physical components framework that satisfies all three requirements. We apply this framework to nonlinear elasticity and inelasticity—referring broadly to constitutive responses involving internal distortions, of which we study anelasticity, viscoelasticity, and visco-anelasticity—examining the deformation gradient, inelastic distortions, strain measures, and stress tensors. We conclude by arguing that physical components remain neither intrinsic nor unique.
\end{abstract}
%-----------------------------
\begin{description}
\item[Keywords:] Physical components of tensors, nonlinear elasticity, inelasticity, material metric, two-point tensors, non-orthogonal coordinates.
\end{description}
\tableofcontents
%-----------------------------
%-----------------------------
\section{Introduction}
\label{Section:Intro}
In curvilinear coordinates, the coordinate functions do not, in general, have the same physical dimension, and increments in different coordinate directions do not necessarily correspond to equal physical quantities. For example, in cylindrical coordinates $(r,\theta,z)$, the radial and axial coordinates $r$ and $z$ have the dimension of length, whereas the angular coordinate $\theta$ is dimensionless. Indeed, the actual displacement in the $\theta$-direction corresponds to a physical length $r d\theta$, which depends on the position. More generally, the coordinate basis vectors are not necessarily of unit length, and their norms are determined by the metric tensors. 
As a result, the coordinate components of a tensor reflect both its tensorial character and the scaling induced by the choice of coordinates.
Different components of a tensor may have different physical dimensions \citep{Truesdell1953physical,Aris1962}. 
Depending on the choice of coordinates in the reference and current configurations, some components of the deformation gradient may not be dimensionless.
In order to obtain components dimensionally consistent with the nature of the tensor (for example, dimensionless components for deformation gradient $\mathbf{F}$ in orthogonal coordinates), one expresses the tensor with respect to orthonormal bases relative to the corresponding metrics. This removes the inherent coordinate-dependent scaling in curvilinear coordinates.

The notion of physical components of tensors in curvilinear coordinates can be traced back to the work of \citet{RicciLeviCivita1901}. In their treatment of vector analysis in generalized coordinates, they distinguished between covariant and contravariant quantities and the corresponding projections and components measured with respect to unit directions associated with the coordinate lines and coordinate surfaces. Although the modern terminology of ``physical components'' had not yet emerged, their construction anticipates the contemporary interpretation of physical components as tensor components referred to orthonormal frames. \citet{Hencky1928} also employed physical components in curvilinear coordinates in his work on nonlinear elasticity. \citet{McConnell1931AbsoluteCalculus} provided one of the first systematic definitions in orthogonal curvilinear coordinates by identifying physical components with projections onto unit tangent vectors to coordinate lines. \citet{SyngeSchild1949}, as well as \citet{GreenZerna1954}, extended these ideas to tensors of higher order within continuum mechanics. \citet{Truesdell1953physical} further clarified the role of physical components and emphasized their relation to intrinsic tensorial structure. \citet{Aris1962} presented a formulation of physical components in orthogonal coordinate systems in the context of fluid mechanics. It should be noted that in non-orthogonal coordinate systems the definition of physical components is not unique, as noted by \citet{Ericksen1960}. A concise historical account and synthesis of these developments is given in \citep{Altman1982PhysicalComponents}.

This paper is organized as follows. In \S\ref{Section:Physical-Components}, we define the physical dimensions of tensor fields and develop the physical-component representation of tensors in orthogonal and general curvilinear coordinates. We also extend the construction to two-point tensors and discuss the non-uniqueness associated with the choice of orthonormal frames. A general normalization framework is introduced to distinguish dimensional consistency, dual compatibility, and unit normalization, and to characterize the orthonormal-frame construction in terms of these requirements. In \S\ref{Section:Anelasticity}, after briefly reviewing the geometric framework of nonlinear elasticity and anelasticity, we derive physical components of the deformation gradient, strain measures, and stress tensors. Particular attention is given to the distinct metrics involved in anelasticity and to the physical-component representations of the elastic and anelastic distortions. The framework is then extended for inelastic processes beyond anelasticity and worked out for viscoelasticity and visco-anelasticity.
In \S\ref{Section:Example}, an example of solids with distributed eigenstrains is considered, and physical components of the relevant tensor fields are explicitly calculated.
Some concluding remarks are given in \S\ref{Section:Conclusions}.
%-----------------------------
%-----------------------------
\section{Physical components of tensors}
\label{Section:Physical-Components}
In this section, we discuss the notion of physical components for tensor fields on Riemannian manifolds. Single-manifold constructions are carried out on a Riemannian manifold $(\mathcal{M},\M)$, while two-point tensors are treated on a pair of Riemannian manifolds $(\mathcal{M},\M)$ and $(\mathcal{N},\m)$ over a smooth map $\phi:\mathcal{M}\to\mathcal{N}$.

%-----------------------------
%-----------------------------
\subsection{Physical dimensions of tensor fields}
\label{S:Phys_D}
Let $\{u_1,\dots,u_k\}$ denote a set of fundamental units (e.g., international system of units, British imperial system, US customary system). A change of the system of units is defined by rescaling these fundamental units by positive factors $\lambda_1,\dots,\lambda_k \in \mathbb{R}^+$, respectively. If under this change of units a scalar physical quantity $\phi$ transforms as $\phi \mapsto \lambda_1^{\alpha_1}\cdots \lambda_k^{\alpha_k}\,\phi$, we write $[\phi]=[u_1]^{\alpha_1}\cdots [u_k]^{\alpha_k}$,
where $[\,\cdot\,]$ denotes the \textit{physical dimension} \citep{Bridgman1922,Barenblatt2003}. For example, if $\phi$ represents a mass density, then under rescaling of length and mass units it transforms as $\phi \mapsto \lambda_{\text{mass}}\,\lambda_{\text{length}}^{-3}\,\phi$, and hence $[\phi]=[\text{mass}]\,[\text{length}]^{-3}=M L^{-3}$. Thus, the physical dimension of $\phi$ is completely characterized by its scaling behavior under changes of the fundamental units.
Let $\mathbf{U}$ be a vector field. If under a change of the system of units it transforms as $\mathbf{U} \mapsto \lambda_1^{\beta_1}\cdots \lambda_k^{\beta_k}\,\mathbf{U}$, one writes $[\mathbf{U}]=[u_1]^{\beta_1}\cdots [u_k]^{\beta_k}$.
This definition is coordinate-free: the vector field itself is multiplied by a scalar under a change of units, while its components may have different physical dimensions depending on the choice of coordinates. For example, a velocity vector has physical dimension $LT^{-1}$.
Let $\boldsymbol{\alpha}$ be a $1$-form. If under a change of the system of units it transforms as $\boldsymbol{\alpha} \mapsto \lambda_1^{\gamma_1}\cdots \lambda_k^{\gamma_k}\,\boldsymbol{\alpha}$,
we write $[\boldsymbol{\alpha}]=[u_1]^{\gamma_1}\cdots [u_k]^{\gamma_k}$.
This definition is again coordinate-free. For example, consider a force field $\mathbf{F}$, viewed as a $1$-form. When acting on a velocity vector $\mathbf{V}$ of a particle, the scalar field $\langle \mathbf{F},\mathbf{V} \rangle$ represents power, and hence has physical dimension $M L^2 T^{-3}$.

A tensor of type $\binom{r}{s}$ at $X \in \mathcal{M}$ is a multilinear map
%-----------------------------
\begin{equation}
\label{eq:(r,s)_tensor}
 \mathbf{T}:\underbrace{T_X^{*}\mathcal{M}\times \cdots \times T_X^{*}\mathcal{M}}_{r~\mathrm{copies}}\times
 \underbrace{T_X\mathcal{M}\times \cdots \times T_X\mathcal{M}}_{s~\mathrm{copies}}
 \rightarrow \mathbb{R}\,,
\end{equation}
%-----------------------------
where $T_X \mathcal{M}$ is the tangent space of $\mathcal{M}$ at $X$, and $T_X^{*}\mathcal{M}$ is its dual \citep{MarsdenHughes1983}. The tensor $\mathbf{T}$ is said to be contravariant of order $r$ and covariant of order $s$. Let $\{X^A\}$ be a local coordinate chart on $\mathcal{M}$. $\mathbf{T}$ admits the following coordinate representation
%-----------------------------
\begin{equation}
	\mathbf{T}=T^{A_1\dots A_r}{}_{B_1\dots B_s}\,\frac{\partial}{\partial X^{A_1}}\otimes\cdots\otimes
	\frac{\partial}{\partial X^{A_r}}\otimes dX^{B_1}\otimes\cdots\otimes dX^{B_s}\,,
\end{equation}
%-----------------------------
with $T^{A_1\dots A_r}{}_{B_1\dots B_s}$ denoting the components of $\mathbf{T}$ with respect to the coordinate basis $\left\{{\partial_A}\right\}$ and its dual basis $\{dX^A\}$, such that, for $\boldsymbol{\alpha}^k=\alpha^k_A\,dX^A\in T_X^*\mathcal{M}$ and $\mathbf{V}_k=V_k^A\,{\partial_A}\in T_X\mathcal{M}$, one may write
%-----------------------------
\begin{equation}
 \mathbf{T}(\boldsymbol{\alpha}^1,\dots,\boldsymbol{\alpha}^r,\mathbf{V}_1,\dots,\mathbf{V}_s)
 = T^{A_1\dots A_r}{}_{B_1\dots B_s}\,\alpha^1_{A_1}\cdots \alpha^r_{A_r} V_1^{B_1}\cdots V_s^{B_s}\,.
\end{equation}
%-----------------------------

A two-point tensor of type $\binom{r~r'}{s~s'}$ over a map $\phi:\mathcal{M}\rightarrow \mathcal{N}$ is a multilinear map
%-----------------------------
\begin{equation}
\begin{aligned}
 \mathbf{T}:\;\underbrace{T_X^{*}\mathcal{M}\times \cdots \times 
 T_X^{*}\mathcal{M}}_{r~\mathrm{copies}}\times
 \underbrace{T_X\mathcal{M}\times \cdots \times T_X\mathcal{M}}_{s~\mathrm{copies}} 
 \times
 \underbrace{T_x^{*}\mathcal{N}\times \cdots \times 
 T_x^{*}\mathcal{N}}_{r'~\mathrm{copies}}\times
 \underbrace{T_x\mathcal{N}\times \cdots \times T_x\mathcal{N}}_{s'~\mathrm{copies}}
 \rightarrow \mathbb{R}\,,
\end{aligned}
\end{equation}
%-----------------------------
where $X\in \mathcal{M}$ and $x=\phi(X)$.
The two-point tensor $\mathbf{T}$ is said to be contravariant of order $(r,r')$ and covariant of order $(s,s')$. Let $\{X^A\}$ and $\{x^a\}$ be local coordinate charts on $\mathcal{M}$ and $\mathcal{N}$, respectively.
$\mathbf{T}$ admits the following coordinate representation
%-----------------------------
\begin{equation} \label{Tensor-General-TwoPoint}
\begin{aligned}
	\mathbf{T} =\mathrm T^{A_1\dots A_r\,a_1\dots a_{r'}}{}_{B_1\dots B_s\,b_1\dots b_{s'}}\,
	& \frac{\partial}{\partial X^{A_1}}\otimes\cdots\otimes\frac{\partial}{\partial X^{A_r}}
	\otimes\frac{\partial}{\partial x^{a_1}}\otimes\cdots\otimes\frac{\partial}{\partial x^{a_{r'}}}\\
	&\otimes dX^{B_1}\otimes\cdots\otimes dX^{B_s}\otimes dx^{b_1}\otimes\cdots\otimes dx^{b_{s'}}\,,
\end{aligned}
\end{equation}
%-----------------------------
with $\mathrm T^{A_1\dots A_r\,a_1\dots a_{r'}}{}_{B_1\dots B_s\,b_1\dots b_{s'}}$ denoting the components of $\mathbf{T}$ in the coordinate bases $\left\{{\partial_A}\right\}$, $\left\{\frac{\partial}{\partial x^a}\right\}$ and their dual bases $\left\{d X^A\right\}$, $\left\{d x^a\right\}$, such that, for $\boldsymbol{\Pi}^{(k)}=\Pi^{(k)}_A\,dX^A\in T_X^*\mathcal{M}$, $\mathbf{W}_{(k)}=\mathrm W_{(k)}^A\,{\partial_A}\in T_X\mathcal{M}$, $\boldsymbol{\pi}^{(k)}=\pi^{(k)}_a\,dx^a\in T_x^*\mathcal{N}$, and $\mathbf{w}_{(k)}=\mathrm w_{(k)}^a\,\frac{\partial}{\partial x^a}\in T_x\mathcal{N}$, one may write
%-----------------------------
\begin{equation}
\begin{aligned}
	&\mathbf{T}(\boldsymbol{\Pi}^{(1)},\dots,\boldsymbol{\Pi}^{(r)},\mathbf{W}_{(1)},\dots,\mathbf{W}_{(s)},
	\boldsymbol{\pi}^{(1)},\dots,\boldsymbol{\pi}^{(r')},\mathbf{w}_{(1)},\dots,\mathbf{w}_{(s')})\\
	&\quad =T^{A_1\dots A_r\,a_1\dots a_{r'}}{}_{B_1\dots B_s\,b_1\dots b_{s'}}\,
	\Pi^{(1)}_{A_1}\cdots \Pi^{(r)}_{A_r} \mathrm W_{(1)}^{B_1}\cdots \mathrm W_{(s)}^{B_s} \pi^{(1)}_{a_1}\cdots \pi^{(r')}_{a_{r'}}
	\mathrm w_{(1)}^{b_1}\cdots \mathrm w_{(s')}^{b_{s'}}\,.
\end{aligned}
\end{equation}
%-----------------------------

The physical dimension of the tensor (or two-point tensor) $\mathbf{T}$ is defined by its scaling under a change of the system of units. If $\mathbf{T} \mapsto \lambda_1^{\delta_1}\cdots \lambda_k^{\delta_k}\,\mathbf{T}$, we write, independently of coordinates, that $[\mathbf{T}]=[u_1]^{\delta_1}\cdots [u_k]^{\delta_k}$.
%-----------------------------
\begin{remark}
\label{rmrk:norm}
The norm of a tensor field is defined using a metric and represents the magnitude of the corresponding physical quantity. Therefore, it must have the same physical dimension as the tensor itself. For example, the norm of a velocity vector is the speed, and both have physical dimension $LT^{-1}$.
If the norm of a tensor $\mathbf{T}$ does not have the same physical dimension as $\mathbf{T}$, then the metric used to define the norm introduces an inconsistent physical scaling. In this case, the metric does not correctly represent the physical geometry of the quantity under consideration. Consequently, a physically meaningful metric must be such that the norm of any tensor has the same physical dimension as the tensor itself.
\end{remark}
%-----------------------------
The metric-induced norms above rely on a further piece of notation, introduced here for use throughout the paper. On a Riemannian manifold $(\mathcal{M}, \M)$ the metric $\M$ supplies a canonical isomorphism between the tangent and cotangent bundles. The \emph{flat} map $\flat: T\mathcal{M} \to T^*\mathcal{M}$, $\mathbf{V}^\flat \coloneq \M(\mathbf{V}, \cdot)$, lowers an index according to $\mathrm{V}^A \cM_{AB} = \mathrm{V}_B$. Its inverse, the \emph{sharp} map $\sharp: T^*\mathcal{M} \to T\mathcal{M}$, $\boldsymbol\vartheta^\sharp \coloneq \M^{-1}(\boldsymbol\vartheta, \cdot)$, raises an index through $\vartheta_A \cM^{-AB} = \vartheta^B$, where $\cM^{-AB}$ are the components of $\M^{-1}$. The two operations act on a general $\binom{r}{s}$ tensor by contracting, respectively, its contravariant slots with $\M$ and its covariant slots with $\M^{-1}$. Applied to the metric itself, sharpening returns the inverse metric: $\M^\sharp = \M^{-1}\M\M^{-1} = \M^{-1}$, and hence in components $\cM^{AB}=\cM^{-AB}$.
%---------------------------------
%---------------------------------
\subsection{Physical components in orthogonal coordinates}
\label{Subsection:orthogonal}

Let $(\mathcal M,\M)$ be a Riemannian manifold, $\{X^A\}$ be an orthogonal curvilinear coordinate chart, $\left\{\partial_A\coloneq{\partial}/{\partial X^A}\right\}$ the corresponding (orthogonal) coordinate basis, and $\{dX^A\}$ its dual basis.
The matrix representation of the metric is hence $\llbracket\cM_{AB}\rrbracket$ is diagonal.\footnote{We reserve square brackets $[\cdot]$ for physical dimensions and use double brackets $\llbracket\cdot\rrbracket$ for matrix representations.}
Let $\mathbf{U}$ be an arbitrary vector with physical dimension $[\mathbf{U}]$ and $\boldsymbol\vartheta$ be an arbitrary 1-form with physical dimension $[\boldsymbol\vartheta]$. By definition of the metric-induced inner product, the scalar ${\|\mathbf{U}\|^2_{\M}=\llangle \mathbf{U},\mathbf{U} \rrangle_{\M}}$ has physical dimension $[\mathbf{U}]^2$, and the scalar ${\|\boldsymbol\vartheta\|^2_{\M^\sharp}=\llangle \boldsymbol\vartheta,\boldsymbol\vartheta \rrangle_{\M^\sharp}}$ has physical dimension $[\boldsymbol\vartheta]^2$---see Remark~\ref{rmrk:norm}. One hence writes 
%---------------------------------
\begin{equation}
\label{eq:norm_dim}
[\mathbf{U}]^2 = [\llangle \mathbf{U},\mathbf{U} \rrangle_{\M}]\,,\quad
[\boldsymbol\vartheta]^2 = [\llangle \boldsymbol\vartheta, \boldsymbol\vartheta \rrangle_{\M^\sharp}]\,.
\end{equation}
%---------------------------------
However, the coordinate representations $\mathbf{U}=\mathrm{U}^A \partial_A$ and $\boldsymbol\vartheta=\vartheta_A dX^A$ yield components $\{\mathrm{U}^A\}$ and $\{\vartheta_A\}$ that do not, in general, share the same physical dimensions as $\mathbf{U}$ and $\boldsymbol\vartheta$, respectively. In what follows, we proceed to define and derive \emph{physical components} of an arbitrary tensor with respect to an orthogonal curvilinear coordinate system.

Let us normalize the coordinate basis $\partial_A$ as
%---------------------------------
\begin{equation} \label{eq:Normalized-basis}
	\hat{\mathbf{E}}_A
	\coloneq \frac{1}{\left\|{\partial_A}\right\|_{\M}}{\partial_A}
	= \frac{1}{\sqrt{\llangle {\partial_A},{\partial_A} \rrangle_{\M}}}
	\,{\partial_A}
	= \frac{1}{\sqrt{\cM_{AA}}}\,{\partial_A}\quad (\text{no summation on }A)\,.
\end{equation}
%---------------------------------
%where we use $\left\|{\partial_A}\right\|_{\M} = \sqrt{\llangle {\partial_A},{\partial_A} \rrangle_{\M}} = \sqrt{\cM_{AA}}\quad (\text{no summation on }A)$.
We next show that the normalized basis vectors ${\hat{\mathbf{E}}_A}$ are dimensionless.% in the sense of physical dimensions.
Specializing \eqref{eq:norm_dim} for the coordinate basis vectors ${\partial_A}$ yields
$\left[{\partial_A}\right]^2
= \left[\llangle {\partial_A},{\partial_A} \rrangle_{\M}\right]
= \left[\cM_{AA}\right]$ (no summation on $A$).
Consequently, for the normalized basis vector $\hat{\mathbf{E}}_A$, we conclude it is dimensionless
%---------------------------------
\begin{equation}
	\left[\hat{\mathbf{E}}_A\right]
	= \frac{1}{\left[\sqrt{\cM_{AA}}\right]}\,\left[{\partial_A}\right]
	= \frac{\sqrt{\left[\cM_{AA}\right]}}{\sqrt{\left[\cM_{AA}\right]}}
	= 1\quad (\text{no summation on }A)\,.
\end{equation}
%---------------------------------
Let $\{\hat{\boldsymbol\Omega}^A\}$ be the dual coframe of $\{\hat{\mathbf{E}}_A\}$.
For the orthogonal coordinates $\{X^A\}$, we have
%---------------------------------
\begin{equation} \label{eq:Normalized-cobasis}
	\hat{\boldsymbol\Omega}^A=\sqrt{\cM_{AA}}\,dX^A\quad (\text{no summation on }A)\,.
\end{equation}
%---------------------------------
Specializing \eqref{eq:norm_dim} for the coordinate dual basis $\{dX^A\}$ yields
$\left[dX^A\right]^2 = \left[\llangle dX^A, dX^A\rrangle_{\M^\sharp}\right]=\left[\cM^{AA}\right]=1/\left[\cM_{AA}\right]$ (no summation on $A$),\footnote{Note that $\cM^{AA}=1/\cM_{AA}$ (no summation on $A$) follows from the fact that the metric representation $\llbracket\cM_{AB}\rrbracket$ is diagonal.}
and one finds that $\{\hat{\boldsymbol\Omega}^A\}$ is dimensionless.

Let write a vector $\mathbf{U}$ and a co-vector $\boldsymbol\vartheta$ in components using \eqref{eq:Normalized-basis} and \eqref{eq:Normalized-cobasis}
%---------------------------------
\begin{equation}
	\mathbf{U}
	=\mathrm{U}^A\,{\partial_A}
	=\sum_A\mathrm{U}^A\sqrt{\cM_{AA}}\,\,\hat{\mathbf{E}}_A
	=\hat{\mathrm U}^A\,\hat{\mathbf{E}}_A\,,
	\quad
	\boldsymbol\vartheta
	=\vartheta_A\,dX^A
	=\sum_A\frac{\vartheta_A}{\sqrt{\cM_{AA}}}\,\hat{\boldsymbol\Omega}^A
	=\hat\vartheta_A\,\hat{\boldsymbol\Omega}^A\,,
\end{equation}
%---------------------------------
where the components $\hat{\mathrm U}^A$ and $\hat\vartheta_A$ obtained in this way carry the same physical dimensions as $\mathbf{U}$ and $\boldsymbol\vartheta$ (unlike the coordinate components $\mathrm{U}^A$ and $\vartheta_A$), respectively, since the frames $\{\hat{\mathbf{E}}_A\}$ and $\{\hat{\boldsymbol\Omega}^A\}$ are dimensionless. The quantities $\hat{\mathrm U}^A$ and $\hat\vartheta_A$ are called \emph{physical components} of $\mathbf{U}$ and $\boldsymbol\vartheta$, respectively, relative to the orthonormal frame~$\{\hat{\mathbf{E}}_A\}$ and coframe~$\{\hat{\boldsymbol\Omega}^A\}$, respectively. Explicitly
%---------------------------------
\begin{equation}\label{eq:Phys_Comp}
	\hat{\mathrm U}^A\coloneq\sqrt{\cM_{AA}}\,\mathrm{U}^A\,,\quad
	\hat\vartheta_A\coloneq\frac{1}{\sqrt{\cM_{AA}}}\,\vartheta_A
	\quad (\text{no summation on }A)\,.
\end{equation}
%---------------------------------
%---------------------------------
\subsection{Physical components in general coordinates}
\label{Subsection:general}
In this section, we consider a general local curvilinear coordinate system $\{X^A\}$ on $\mathcal{M}$, i.e. one that is not necessarily $\M$-orthogonal.
The coordinate basis $\{\partial/\partial X^A\}$ is hence, in general, not orthogonal.\footnote{Note that the metric representation $\llbracket\cM_{AB}\rrbracket$ is hence not necessarily diagonal.}
In order to define physical components, we orthonormalize the frame and use the results of the previous subsection.
We proceed by constructing an orthonormal moving frame $\{\hat{\mathbf{E}}_A\}$ via the Gram--Schmidt process. First, normalize the first basis element,
%---------------------------------
\begin{equation}
\label{eq:i1}
	\hat{\mathbf{E}}_1
	\coloneq \frac{1}{\sqrt{\cM_{11}}}\,\frac{\partial}{\partial X^1}\,.
\end{equation}
%---------------------------------
Next, take the orthogonal projection of ${\partial}/{\partial X^2}$ relative to ${\partial}/{\partial X^1}$ which is given by
%---------------------------------
\begin{equation}
	\check{\mathbf{E}}_2
	\coloneq \frac{\partial}{\partial X^2} - \frac{\cM_{12}}{\cM_{11}}\,\frac{\partial}{\partial X^1}\,.
\end{equation}
%---------------------------------
Its squared norm is $\llangle \check{\mathbf{E}}_2,\check{\mathbf{E}}_2 \rrangle_{\M}=(\cM_{11}\cM_{22}-\cM_{12}^2)/\cM_{11}$, so that the normalized vector is
%---------------------------------
\begin{equation}
\label{eq:i2}
	\hat{\mathbf{E}}_2= \sqrt{ \frac{\cM_{11}}{\cM_{11}\cM_{22}-\cM_{12}^2}}~\check{\mathbf{E}}_2\,.
\end{equation}
%---------------------------------
Finally, let us take the orthogonal projection of ${\partial}/{\partial X^3}$ relative to the plane $(\partial/\partial X^1,\partial/\partial X^2)$
%---------------------------------
\begin{equation}
\begin{aligned}
	\check{\mathbf{E}}_3=	\frac{\partial}{\partial X^3}-	\frac{\cM_{13}}{\cM_{11}}\,\frac{\partial}{\partial X^1}
	-\frac{\cM_{11}\cM_{23}-\cM_{12}\cM_{13}}{\cM_{11}\cM_{22}-\cM_{12}^2}	
	\left(\frac{\partial}{\partial X^2}-\frac{\cM_{12}}{\cM_{11}}\,\frac{\partial}{\partial X^1}\right)\,.
\end{aligned}
\end{equation}
%---------------------------------
Its squared norm is $\llangle \check{\mathbf{E}}_3,\check{\mathbf{E}}_3 \rrangle_{\M} =\det\M/(\cM_{11}\cM_{22}-\cM_{12}^2)$, so that the normalized vector is
%---------------------------------
\begin{equation}
\label{eq:i3}
	\hat{\mathbf{E}}_3=\sqrt{\frac{\cM_{11}\cM_{22}-\cM_{12}^2}{\det\M}}~\check{\mathbf{E}}_3\,.
\end{equation}
%---------------------------------
The Gram--Schmidt process carried out above in~\eqref{eq:i1}--\eqref{eq:i3} may be compactly encoded in a linear transformation ${\Lt:T_X\mathcal{M}\to T_X\mathcal{M}}$ relating the construction of the orthonormal frame $\{\hat{\mathbf{E}}_A\}$ from the coordinate frame $\{\partial/\partial X^A\}$ as
%---------------------------------
\begin{equation}\label{eq:beta-def}
	\hat{\mathbf{E}}_A = \cL_A{}^B\,\frac{\partial}{\partial X^B}\,,
\end{equation}
%---------------------------------
with the orthonormalization transformation $\Lt$ explicitly given by the following representation
%---------------------------------
\begin{equation}
\label{eq:beta-explicit}
\renewcommand{\arraystretch}{1.3}
	\llbracket\cL_A{}^B\rrbracket
	=\begin{bmatrix}
	\dfrac{1}{\sqrt{\cM_{11}}} & 0 & 0 \\[8pt]
	\dfrac{-\cM_{12}}{\sqrt{\cM_{11}\,(\cM_{11}\cM_{22}-\cM_{12}^2)}}
	& \sqrt{\dfrac{\cM_{11}}{\cM_{11}\cM_{22}-\cM_{12}^2}} & 0 \\[12pt]
	\dfrac{\cM_{12}\cM_{23}-\cM_{13}\cM_{22}}{\sqrt{(\cM_{11}\cM_{22}-\cM_{12}^2)\det\M}}
	& \dfrac{\cM_{12}\cM_{13}-\cM_{11}\cM_{23}}{\sqrt{(\cM_{11}\cM_{22}-\cM_{12}^2)\det\M}}
	& \sqrt{\dfrac{\cM_{11}\cM_{22}-\cM_{12}^2}{\det\M}}
	\end{bmatrix}
\renewcommand{\arraystretch}{1}\,.
\end{equation}
%---------------------------------
The dual coframe $\{\hat{\boldsymbol\Omega}^A\}$ is defined by
$\langle \hat{\boldsymbol\Omega}^A,\hat{\mathbf{E}}_B \rangle=\delta^A_B$, and hence
%---------------------------------
\begin{equation} \label{eq:cobeta-def}
	\hat{\boldsymbol\Omega}^A = \big(\cL^{-1}\big)_B{}^A\,dX^B\,.
\end{equation}
%---------------------------------
As in the orthogonal case, the frame $\{\hat{\mathbf{E}}_A\}$ and coframe $\{\hat{\boldsymbol\Omega}^A\}$ are dimensionless. As a matter of fact, by construction $\llangle \hat{\mathbf{E}}_A,\hat{\mathbf{E}}_A \rrangle_{\M}=1$ (no summation on $A$), and hence $[\hat{\mathbf{E}}_A]^2=1$. The dual coframe inherits this property: $\langle \hat{\boldsymbol\Omega}^A,\hat{\mathbf{E}}_B \rangle=\delta^A_B$ gives $[\hat{\boldsymbol\Omega}^A]=1$.

Inverting \eqref{eq:beta-def} and its dual \eqref{eq:cobeta-def} yields
%---------------------------------
\begin{equation}
	\frac{\partial}{\partial X^B}=\big(\cL^{-1}\big)_B{}^A\,\hat{\mathbf{E}}_A\,,
	\quad
	dX^B=\cL_A{}^B\,\hat{\boldsymbol\Omega}^A\,.
\end{equation}
%---------------------------------
Substituting into the coordinate expressions for $\mathbf{U}=\mathrm{U}^A\,{\partial_A}$ and $\boldsymbol\vartheta=\vartheta_A\,dX^A$ gives
%---------------------------------
\begin{equation}
	\mathbf{U}
	=\mathrm{U}^B\,\frac{\partial}{\partial X^B}
	=\mathrm{U}^B\,\big(\cL^{-1}\big)_B{}^A\,\hat{\mathbf{E}}_A
	=\hat{\mathrm U}^A\,\hat{\mathbf{E}}_A\,,
	\quad
	\boldsymbol\vartheta
	=\vartheta_B\,dX^B
	=\vartheta_B\,\cL_A{}^B\,\hat{\boldsymbol\Omega}^A
	=\hat\vartheta_A\,\hat{\boldsymbol\Omega}^A\,,
\end{equation}
%---------------------------------
from which the \emph{physical components} of $\mathbf{U}$ and $\boldsymbol\vartheta$, i.e. their components relative to the (dimensionless) orthonormal frame $\{\hat{\mathbf{E}}_A\}$ and coframe $\{\hat{\boldsymbol\Omega}^A\}$, may be read off as
%---------------------------------
\begin{equation}\label{eq:Phys_Comp_gen}
	\hat{\mathrm U}^A = \big(\cL^{-1}\big)_B{}^A\,\mathrm{U}^B\,,\quad
	\hat\vartheta_A = \cL_A{}^B\,\vartheta_B\,.
\end{equation}
%---------------------------------
%---------------------------------
\begin{remark}[Transformation of physical components under a change of orthonormal frames]
\label{Remark:Frame-dep}
A metric fixes the family of orthonormal frames but not a specific member of it. In orthogonal coordinates (\S\ref{Subsection:orthogonal}) the normalized coordinate basis is a distinguished member---the unique frame tangent to the coordinate lines, whereas the Gram--Schmidt constructed frame above is merely one arbitrary choice, and any other orthonormal frame differs from it by an orthogonal transformation. Explicitly, any $\M$-orthonormal frame $\{\tilde{\mathbf E}_A\}$ is related to $\{\hat{\mathbf E}_A\}$ by a unique field of orthogonal transformations ${\mathbf T(X)\in O(T_X\mathcal{M},\M(X))}$ such that\footnote{$O(T_X\mathcal{M},\M(X))$ is the group of $\M$-orthogonal transformation over $T_X\mathcal{M}$, by which we mean a transformation ${\mathbf{T}(X):T_X\mathcal{M}\to T_X\mathcal{M}}$ that preserves the metric, i.e. $\mathbf{T} \M \mathbf{T}^\star = \M $; expressed in components with respect to the orthonormal frames $\{\tilde{\mathbf E}_A\}$ and $\{\hat{\mathbf E}_A\}$, this reads $\mathrm{T}_A{}^C \,\delta_{CD}\, \mathrm{T}_B{}^D = \delta_{AB}$.} 
%---------------------------------
\begin{equation}
\label{eq:Ortho_change}
	\tilde{\mathbf E}_A=\mathrm T_A{}^B\,\hat{\mathbf E}_B\,,
	\quad\textrm{or equivalently}\quad
	\hat{\mathbf E}_B=(\mathrm T^{-1})_B{}^A\,\tilde{\mathbf E}_A\,,
\end{equation}
%---------------------------------
and, conversely, every such $\mathbf{T}$ carries $\{\hat{\mathbf E}_A\}$ to an $\M$-orthonormal frame. The family of orthonormal frames adapted to $\M$ is thus a single equivalence class, any two members of which differing by an element of $O(T_X\mathcal{M},\M(X))$.\footnote{This gauge freedom is equivalently a freedom in the orthonormalization of the coordinate basis. From~\eqref{eq:beta-def} and~\eqref{eq:Ortho_change}, the transformed frame is generated by $\tilde{\Lt}=\mathbf{T} \Lt$, i.e. $\tilde{\mathbf E}_A=\tilde\cL_A{}^B\,\partial/\partial X^B$. Conversely, any orthonormalization map $\tilde{\Lt}$---e.g., Gram--Schmidt with a different ordering or a symmetric orthonormalization---satisfies the same orthonormality condition as $\Lt$, i.e. $\tilde{\Lt}\, \M\, \tilde{\Lt}^\star = \Lt \M \Lt^\star = \boldsymbol\delta$, which yields $\tilde{\Lt} \Lt^{-1} \M (\tilde{\Lt} \Lt^{-1})^\star = \boldsymbol\delta$, and hence defines an orthogonal transformation $\mathbf{T}\coloneq\tilde{\Lt} \Lt^{-1}$ relating $\tilde{\Lt}$ to $\Lt$. The choice of orthonormalization procedure is therefore indeed none other than the orthogonal gauge freedom~\eqref{eq:Ortho_change}.}
Fixing a new frame $\{\tilde{\mathbf E}_A\}$, it follows from \eqref{eq:Ortho_change} that
%---------------------------------
\begin{equation}
\mathbf{U}
	= \hat{\mathrm U}^A\,\hat{\mathbf E}_A
	= \hat{\mathrm U}^A (\mathrm T^{-1})_A{}^B\,\tilde{\mathbf E}_B
	= \tilde{\mathrm U}^B\,\tilde{\mathbf E}_B
	\,,
\end{equation}
%---------------------------------
where physical components of $\mathbf U$ with respect to the transformed orthonormal frame $\{\tilde{\mathbf E}_A\}$ are given by
%---------------------------------
\begin{equation}
\label{eq:vect_trans}
\tilde{\mathrm U}^B = (\mathrm T^{-1})_A{}^B\, \hat{\mathrm U}^A\,.
\end{equation}
%---------------------------------
Let $\{\tilde{\boldsymbol\Omega}^A\}$ denote the coframe dual to $\{\tilde{\mathbf E}_A\}$. Since $\tilde{\boldsymbol\Omega}^A(\tilde{\mathbf E}_B)=\delta^A_B$, one has
%---------------------------------
\begin{equation}
\label{eq:Co_Ortho_change}
	\tilde{\boldsymbol\Omega}^A=(\mathrm T^{-1})_B{}^A\,\hat{\boldsymbol\Omega}^B\,,
	\quad\textrm{or equivalently}\quad
	\hat{\boldsymbol\Omega}^B=\mathrm T_A{}^B\,\tilde{\boldsymbol\Omega}^A\,.
\end{equation}
%---------------------------------
Hence, one may write
%---------------------------------
\begin{equation}
\boldsymbol\vartheta
	= \hat\vartheta_A\, \hat{\boldsymbol\Omega}^A
	= \hat\vartheta_A\, \mathrm T_B{}^A\,\tilde{\boldsymbol\Omega}^B
	= \tilde\vartheta_B\, \tilde{\boldsymbol\Omega}^B
	\,,
\end{equation}
%---------------------------------
where physical components of $\boldsymbol\vartheta$ with respect to the transformed orthonormal coframe $\{\tilde{\boldsymbol\Omega}^A\}$ are given by
%---------------------------------
\begin{equation}
\label{eq:1-form_trans}
\tilde\vartheta_B = \mathrm T_B{}^A\, \hat\vartheta_A\,.
\end{equation}
%---------------------------------
Thus, given a choice of orthonormal frame physical components are uniquely determined, and they are otherwise defined up to the orthogonal transformations \eqref{eq:vect_trans} and \eqref{eq:1-form_trans}.
\end{remark}
%---------------------------------
%---------------------------------
\begin{remark}[Orthogonal coordinates]
\label{remark:Ortho_Coord_M}
In orthogonal coordinates the metric components $\cM_{AB}$ are diagonal, and the Gram--Schmidt construction $\Lt$ as given in \eqref{eq:beta-explicit}, and subsequently any other orthonormalization $\tilde{\Lt}=\mathbf{T} \Lt$, reduces to a diagonal transformation. In particular, one finds following \eqref{eq:beta-explicit} that
%---------------------------------
\begin{equation}
	\cL_A{}^B = \frac{1}{\sqrt{\cM_{AA}}}\,\delta_A^B\,,\quad
	\big(\cL^{-1}\big)_B{}^A = \sqrt{\cM_{AA}}\,\delta^A_B \quad (\text{no summation on $A$})\,.
	\label{eq:Ortho_Coord_M}
\end{equation}
%---------------------------------
Thus, the general formulae~\eqref{eq:Phys_Comp_gen} for physical components reduce to the orthogonal coordinate expressions~\eqref{eq:Phys_Comp}.
\end{remark}
%---------------------------------
%---------------------------------
\subsection{Physical components of two-point tensors}
\label{Subsection:two-point}
Let $\phi:\mathcal{M}\to\mathcal{N}$ be a smooth map between two Riemannian manifolds $(\mathcal{M},\M)$ and $(\mathcal{N},\m)$, and let $\mathbf{T}$ be a two-point tensor of type $\binom{r~r'}{s~s'}$ over $\phi$. Recall the coordinate representation~\eqref{Tensor-General-TwoPoint} in coordinate charts $\{X^A\}$ on $\mathcal{M}$ and $\{x^a\}$ on $\mathcal{N}$. Physical components now require orthonormalization on both legs---the domain tangent space $T_X\mathcal{M}$, equipped with $\M$, and the codomain tangent space $T_x\mathcal{N}$, equipped with $\m$. We work directly in general coordinates; the orthogonal-coordinate case follows by the diagonal reduction~\eqref{eq:Ortho_Coord_M} applied to each leg.

In general coordinates, the coordinate bases $\{\partial_A\}$ and $\{\partial_a\}$ are not necessarily orthogonal, and orthonormal frames may be constructed on each leg---e.g., by the Gram--Schmidt process of \S\ref{Subsection:general}---encoded in linear maps ${\Lt:T_X\mathcal{M}\to T_X\mathcal{M}}$ and ${\boldsymbol\ell:T_x\mathcal{N}\to T_x\mathcal{N}}$ such that
%---------------------------------
\begin{equation}
\label{eq:2pt-frames}
	\hat{\mathbf E}_A=\cL_A{}^B\,\frac{\partial}{\partial X^B}\,,
	\quad
	\hat{\mathbf e}_a=\ell_a{}^b\,\frac{\partial}{\partial x^b}\,,
\end{equation}
%---------------------------------
with dual coframes given by
%---------------------------------
\begin{equation}
\label{eq:2pt-coframes}
	\hat{\boldsymbol\Omega}^A=(\cL^{-1})_B{}^A\,dX^B\,,
	\quad
	\hat{\boldsymbol\omega}^a=(\ell^{-1})_b{}^a\,dx^b\,.
\end{equation}
%---------------------------------
As in \S\ref{Subsection:general}, the orthonormality conditions $\llangle\hat{\mathbf E}_A,\hat{\mathbf E}_B\rrangle_{\M}=\delta_{AB}$ and $\llangle\hat{\mathbf e}_a,\hat{\mathbf e}_b\rrangle_{\m}=\delta_{ab}$ render all four sets dimensionless. Inverting~\eqref{eq:2pt-frames} and its dual~\eqref{eq:2pt-coframes} yields
%---------------------------------
\begin{subequations}
\begin{alignat}{2}
	& \frac{\partial}{\partial X^B} =(\cL^{-1})_B{}^A\,\hat{\mathbf E}_A\,,
	\quad&
	& dX^B =\cL_A{}^B\,\hat{\boldsymbol\Omega}^A\,,
	\\
	& \frac{\partial}{\partial x^b} =(\ell^{-1})_b{}^a\,\hat{\mathbf e}_a\,,
	\quad&
	& dx^b =\ell_a{}^b\,\hat{\boldsymbol\omega}^a\,,
\end{alignat}
\end{subequations}
%---------------------------------
and substituting into~\eqref{Tensor-General-TwoPoint} gives physical components of $\mathbf{T}$
%---------------------------------
\begin{equation}\label{eq:Phys-Comp-2pt}
\begin{aligned}
	\hat T^{A_1\dots A_r\,a_1\dots a_{r'}}{}_{B_1\dots B_s\,b_1\dots b_{s'}}
	&= (\cL^{-1})_{C_1}{}^{A_1}\cdots(\cL^{-1})_{C_r}{}^{A_r}\,
	(\ell^{-1})_{c_1}{}^{a_1}\cdots(\ell^{-1})_{c_{r'}}{}^{a_{r'}}\\
	&\quad\times\,\cL_{B_1}{}^{D_1}\cdots\cL_{B_s}{}^{D_s}\,
	\ell_{b_1}{}^{d_1}\cdots\ell_{b_{s'}}{}^{d_{s'}}\,
	T^{C_1\dots C_r\,c_1\dots c_{r'}}{}_{D_1\dots D_s\,d_1\dots d_{s'}}\,.
\end{aligned}
\end{equation}
%---------------------------------
Each contravariant domain index is contracted with $\Lt^{-1}$ and each covariant domain index with $\Lt$, while each contravariant codomain index is contracted with $\boldsymbol\ell^{-1}$ and each covariant codomain index with $\boldsymbol\ell$. For a two-point tensor of type $\binom{0~1}{1~0}$---one contravariant codomain index and one covariant domain index, the type of the deformation gradient---this reduces to
%---------------------------------
\begin{equation}\label{eq:Phys-Comp-2pt-prototype}
	\hat T^a{}_A=(\ell^{-1})_b{}^a\,T^b{}_B\,\cL_A{}^B\,.
\end{equation}

%---------------------------------
%--------------------------------------------------------------
\paragraph{Change of orthonormal frames.}
The gauge freedom of Remark~\ref{Remark:Frame-dep} now acts on the two legs independently: each metric fixes the family of orthonormal frames on its tangent space but no specific member, so $T_X\mathcal{M}$ and $T_x\mathcal{N}$ each carry their own orthogonal gauge. For any pair of orthonormal frames $\{\tilde{\mathbf E}_A\}$ on $T_X\mathcal{M}$ and $\{\tilde{\mathbf e}_a\}$ on $T_x\mathcal{N}$, there exist $\mathbf Q(X)\in O(T_X\mathcal{M},\M)$ and $\mathbf q(x)\in O(T_x\mathcal{N},\m)$ such that
%---------------------------------
\begin{equation}
\label{eq:2pt-frame-change}
	\tilde{\mathbf E}_A=\mathrm Q_A{}^B\,\hat{\mathbf E}_B\,,
	\quad
	\tilde{\mathbf e}_a=\mathrm q_a{}^b\,\hat{\mathbf e}_b\,,
\end{equation}
%---------------------------------
with dual coframes
%---------------------------------
\begin{equation}
\label{eq:2pt-coframe-change}
	\tilde{\boldsymbol\Omega}^A = (\mathrm Q^{-1})_B{}^A\,\hat{\boldsymbol\Omega}^B\,,
	\quad
	\tilde{\boldsymbol\omega}^a = (\mathrm q^{-1})_b{}^a\,\hat{\boldsymbol\omega}^b\,,
\end{equation}
%---------------------------------
exactly as in~\eqref{eq:Ortho_change} and~\eqref{eq:Co_Ortho_change}. For the prototype~\eqref{eq:Phys-Comp-2pt-prototype} this gives us
%---------------------------------
\begin{equation}\label{eq:2pt-comp-change}
	\tilde T^a{}_A=(\mathrm q^{-1})_b{}^a\,\hat T^b{}_B\,\mathrm Q_A{}^B\,,
\end{equation}
%---------------------------------
the contravariant codomain index transforming as a vector~\eqref{eq:vect_trans} and the covariant domain index as a $1$-form~\eqref{eq:1-form_trans}; the general case~\eqref{eq:Phys-Comp-2pt} transforms one such factor per index. Thus, for a fixed pair of metrics $(\M,\m)$, physical components of a two-point tensor are uniquely defined up to independent orthogonal transformations of the two frames, transforming under $O(T_X\mathcal{M},\M)\times O(T_x\mathcal{N},\m)$. As in Remark~\ref{Remark:Frame-dep}, choosing the two frames is equivalent to choosing an orthonormalization procedure on each leg.
%-----------------------------
\begin{remark}[Physical components as frame-dependent representations]
\label{Remark:Frame-dependence}
A tensor field is an intrinsic geometric object, defined independently of any choice of coordinates, basis, or frame. Once metrics are chosen on $\mathcal{M}$ and $\mathcal{N}$, one may represent a two-point tensor $\mathbf{T}$ with respect to orthonormal frames on its two legs and obtain what are commonly called its physical components. These components are not intrinsic. The reason is that a metric determines the family of orthonormal frames but not a unique member of it, so that physical components are fixed only up to the gauge freedom~\eqref{eq:2pt-comp-change}. For orthogonal coordinate systems there is a distinguished orthonormal frame, obtained by normalizing the coordinate basis vectors; this provides a natural choice and explains the widespread use of physical components in applications. The choice is tied to the coordinate system, however, and not to the metric, so it is not intrinsic either. In general coordinates there is no preferred frame, and hence no canonical notion of physical components.
This situation is completely analogous to the representation of a vector in Euclidean space. A vector is an intrinsic geometric object, whereas its Cartesian components depend on the chosen orthonormal basis. The metric identifies which bases are orthonormal, but it does not single out a preferred one. Likewise, physical components are simply the components of a tensor relative to a chosen orthonormal frame.
The tensor itself is thus the fundamental geometric object, its physical components being frame-dependent representations of it---representations one can hardly do without, being referred throughout to unit directions and carrying, unlike the coordinate components, a single physical dimension. 
The point is not to deny their usefulness, but to emphasize their non-uniqueness.
\end{remark}
%-----------------------------
%---------------------------------
\subsection{Normalized and physical components: A general framework}
\label{Subsection:generalization}
The orthonormal-frame constructions of the preceding subsections are special cases of a broader family of schemes that define tensor components with consistent physical dimension. In this subsection, we characterize this family, whose members we call normalization transformations and whose outputs we call normalized components. We identify the additional requirements under which normalized components represent physically significant quantities, and hence deserve the designation \emph{physical}, and determine what these requirements entail in general coordinates. We then use this framework to assess alternative constructions proposed in the literature, notably that of \citet{Aris1962}, against the same requirements.
%---------------------------------
%---------------------------------
\subsubsection{Dimensional consistency}
\label{Subsubsection:Dimensional-Consistency}
Let $\{X^I\}$ be an arbitrary local coordinate chart on $(\mathcal M,\M)$, with $n=\dim\mathcal M$. Let $\boldsymbol{\mathsf{A}}$ and $\boldsymbol{\mathsf{B}}$ be invertible, position-dependent transformations acting on the coordinate frame and coframe, respectively, and generating the frame and coframe fields
%---------------------------------
\begin{equation}\label{eq:AB-frames}
	\hat{\mathbf E}_A
	=\mathsf{A}_A{}^I\,\partial_I\,,
	\quad
	\hat{\boldsymbol\Omega}^A
	=\mathsf{B}_I{}^A\,dX^I\,,
\end{equation}
%---------------------------------
with no orthonormality, or any relation between the two assumed at this stage. An arbitrary vector $\mathbf U=\mathrm{U}^I\,\partial_I$ and $1$-form $\boldsymbol\vartheta=\vartheta_I\,dX^I$ have the representations $\mathbf U=\hat{\mathrm U}^A\,\hat{\mathbf E}_A$ and $\boldsymbol\vartheta=\hat\vartheta_A\,\hat{\boldsymbol\Omega}^A$, with components
%---------------------------------
\begin{equation}\label{eq:AB-components}
	\hat{\mathrm U}^A
	=(\mathsf{A}^{-1})_I{}^A\,\mathrm{U}^I\,,
	\quad
	\hat\vartheta_A
	=(\mathsf{B}^{-1})_A{}^I\,\vartheta_I\,.
\end{equation}
%---------------------------------
We call $\boldsymbol{\mathsf{A}}$ (and $\boldsymbol{\mathsf{B}}$, respectively) a \emph{normalization transformation} if $[\hat{\mathrm U}^A]=[\mathbf U]$ for every vector $\mathbf U$ (if $[\hat\vartheta_A]=[\boldsymbol\vartheta]$ for every $1$-form $\boldsymbol\vartheta$, respectively), i.e. if all the components it defines carry the physical dimension of the underlying tensor field.
%-----------------------------
\begin{lem}[Dimensional characterization]\label{Lemma:Dimensional}
$\boldsymbol{\mathsf{A}}$ is a normalization transformation if and only if the frame it generates is dimensionless, i.e. $[\hat{\mathbf E}_A]=1$. Similarly, $\boldsymbol{\mathsf{B}}$ is a normalization transformation if and only if the coframe it generates is dimensionless, i.e. $[\hat{\boldsymbol\Omega}^A]=1$.
\end{lem}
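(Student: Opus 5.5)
The argument rests on one observation. In the expansion $\mathbf U=\hat{\mathrm U}^A\,\hat{\mathbf E}_A$, the tensor $\mathbf U$ has a single scaling behaviour under a change of units, while each summand $\hat{\mathrm U}^A\hat{\mathbf E}_A$ scales as $[\hat{\mathrm U}^A][\hat{\mathbf E}_A]$. I would therefore first record a dimensional bookkeeping rule: if $\mathbf U\neq 0$ has a definite dimension $[\mathbf U]$ and $\mathbf U$ is expanded in a frame whose members each have definite dimensions, then every nonzero component satisfies $[\hat{\mathrm U}^A][\hat{\mathbf E}_A]=[\mathbf U]$.

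To justify the rule, I would use the definition of \S\ref{S:Phys_D}. Under rescaling of units $\lambda_i$, the frame vectors rescale by their own factors and the components rescale by theirs. Since the frame $\{\hat{\mathbf E}_A\}$ is a basis, expansion coefficients are unique. Comparing $\lambda^{\delta}\mathbf U$ with the rescaled sum then forces the exponents to match, component by component.

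Both directions for $\boldsymbol{\mathsf A}$ follow from this rule. For the ``if'' direction, assume $[\hat{\mathbf E}_A]=1$ for all $A$. Then for any vector $\mathbf U$, the rule gives $[\hat{\mathrm U}^A]=[\mathbf U]$, so $\boldsymbol{\mathsf A}$ is a normalization transformation. For the ``only if'' direction, test the hypothesis on the frame vectors themselves. Take $\mathbf U=\hat{\mathbf E}_B$ for a fixed $B$; its components are $\hat{\mathrm U}^A=\delta^A_B$, which are pure numbers and hence dimensionless. The normalization property then demands $[\mathbf U]=[\hat{\mathrm U}^B]=1$, that is, $[\hat{\mathbf E}_B]=1$.

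One caveat concerns a test vector that ``has a physical dimension'' yet happens to coincide with a frame vector. To avoid this, I could instead take a physical vector $\mathbf U$ of any dimension $D$ whose direction is $\hat{\mathbf E}_B$, i.e. $\mathbf U=c\,\hat{\mathbf E}_B$ with $[c]=D/[\hat{\mathbf E}_B]$. Its only nonzero component is $c$, and the hypothesis $[c]=D$ again yields $[\hat{\mathbf E}_B]=1$.

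The coframe statement for $\boldsymbol{\mathsf B}$ is the verbatim dual argument, using $\boldsymbol\vartheta=\hat\vartheta_A\,\hat{\boldsymbol\Omega}^A$ and the test $1$-forms $\hat{\boldsymbol\Omega}^B$. It can equally be read off from \eqref{eq:AB-components}: $\hat\vartheta_A=\langle\boldsymbol\vartheta,\hat{\mathbf E}'_A\rangle$ for the frame dual to $\hat{\boldsymbol\Omega}$, but the direct argument avoids introducing that frame.

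The main obstacle is conceptual rather than computational. One must make precise that a sum of terms of possibly different dimensions cannot have a single dimension unless the mismatched terms vanish. This requires that a change of units acts on the frame vectors by scalar multiplication, i.e. that each $\hat{\mathbf E}_A$ has a definite dimension, as in \eqref{eq:norm_dim} for coordinate vectors. It also requires uniqueness of basis expansions. I would state this as an explicit standing assumption, namely that the frames generated by $\boldsymbol{\mathsf A},\boldsymbol{\mathsf B}$ consist of dimensionally homogeneous fields. Once that assumption is granted, the rest is immediate.
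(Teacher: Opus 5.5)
Your proposal is correct and follows essentially the same argument as the paper. For the ``only if'' direction you test the normalization property on $\mathbf U=\hat{\mathbf E}_B$, whose components are $\delta^A_B$; for the ``if'' direction you read $[\mathbf U]=[\hat{\mathrm U}^A]\,[\hat{\mathbf E}_A]$ (no summation) off the expansion; and the coframe case is the dual argument. Your additional justification, via uniqueness of basis expansions and the standing assumption that each frame element has a definite dimension, makes explicit what the paper's short proof uses without stating it.
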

%-----------------------------
\begin{proof}
Suppose $\boldsymbol{\mathsf{A}}$ is a normalization transformation and take $\mathbf U=\hat{\mathbf E}_B$, whose components are $\hat{\mathrm U}^A=\delta^A_B$ and hence dimensionless; then $[\hat{\mathbf E}_B]=[\mathbf U]=[\hat{\mathrm U}^A]=1$. Conversely, $\mathbf U=\hat{\mathrm U}^A\,\hat{\mathbf E}_A$ gives $[\mathbf U]=[\hat{\mathrm U}^A]\,[\hat{\mathbf E}_A]$ (no summation on $A$), so that $[\hat{\mathbf E}_A]=1$ yields $[\hat{\mathrm U}^A]=[\mathbf U]$ for every $\mathbf U$. The $1$-form proof follows similarly.
\end{proof}
%-----------------------------

Dimensional consistency, however, is not a requirement imposed upon normalization transformations but the attribute that defines them, and it does not by itself suffice. 
Two additional requirements are needed for the normalized components to be physically significant: first, normalization must preserve the component expression of the natural pairing, so that $\langle\boldsymbol{\vartheta},\mathbf{U}\rangle=\vartheta_I \mathrm{U}^I=\hat{\vartheta}_A\,\hat{\mathrm  U}^A$ for every vector $\mathbf{U}$ and $1$-form $\boldsymbol{\vartheta}$. Second, the normalized frame and coframe must be of unit length. Without this requirement, their arbitrary rescaling produces corresponding inverse rescalings of the normalized components, so the numerical value of a component depends on the scale assigned to the frame or coframe element rather than only on the tensor and the associated direction. Unit normalization removes this scaling ambiguity, and one requires that ${\big\lVert\hat{\mathbf E}_A\big\rVert_{\M}=1}$ and ${\big\lVert\hat{\boldsymbol\Omega}^A\big\rVert_{\M^\sharp}=1}$.
Neither requirement follows from dimensional consistency alone.
The first may fail because the maps $\boldsymbol{\mathsf{A}}$ and $\boldsymbol{\mathsf{B}}$ may be chosen independently, so that the normalizations of contravariant and covariant components need not be mutually consistent. 
The second may fail because the class of normalization transformations is closed under $\boldsymbol{\mathsf{A}}\mapsto\boldsymbol{\mathsf{S}}\,\boldsymbol{\mathsf{A}}$ and $\boldsymbol{\mathsf{B}}\mapsto\boldsymbol{\mathsf{B}}\,\boldsymbol{\mathsf{T}}$ for arbitrary dimensionless invertible transformations $\boldsymbol{\mathsf{S}}(X),\boldsymbol{\mathsf{T}}(X)\in GL(T_X\mathcal{M})$,\footnote{$GL(T_X\mathcal{M})$ is the group of invertible linear transformations of $T_X\mathcal{M}$; expressed in components with respect to a pair of frames, it is the group of invertible matrices $\llbracket\mathsf{S}_A{}^B\rrbracket$. Note that the orthogonal group discussed earlier is the subgroup $O(T_X\mathcal{M},\M)\subset GL(T_X\mathcal{M})$ preserving the metric $\M$.} and any two sets $(\boldsymbol{\mathsf{A}},\boldsymbol{\mathsf{B}})$ and $(\boldsymbol{\mathsf{A}}',\boldsymbol{\mathsf{B}}')$ of normalization transformations differ by exactly one such dimensionless pair, $\boldsymbol{\mathsf{S}}=\boldsymbol{\mathsf{A}}'\boldsymbol{\mathsf{A}}^{-1}$ and $\boldsymbol{\mathsf{T}}=\boldsymbol{\mathsf{B}}^{-1}\boldsymbol{\mathsf{B}}'$.\footnote{$\boldsymbol{\mathsf{S}}=\boldsymbol{\mathsf{A}}'\boldsymbol{\mathsf{A}}^{-1}$ and $\boldsymbol{\mathsf{T}}=\boldsymbol{\mathsf{B}}^{-1}\boldsymbol{\mathsf{B}}'$ are dimensionless as they effectively relate dimensionless frames and coframes.} Each map is thus determined only up to a dimensionless $GL(T_X\mathcal{M})$ gauge freedom, the frame and the coframe being rescalable at will.\footnote{The independent gauge freedoms associated with $\boldsymbol{\mathsf{S}}$ and $\boldsymbol{\mathsf{T}}$ further emphasize that dimensional consistency alone does not ensure compatibility between the normalized contravariant and covariant components.}
In what follows, we impose the two requirements in turn, and record at each step the reduction of this gauge freedom.
%---------------------------------
%---------------------------------
\subsubsection{Dual compatibility}
The first requirement noted above---that of natural pairing consistency---may be achieved by requiring the normalization to preserve the natural pairing between vectors and $1$-forms, i.e. 
%---------------------------------
\begin{equation}\label{Normalization-Compatibility}
	(\text{D}):\ \hat\vartheta_A\,\hat{\mathrm U}^A=\vartheta_I\,\mathrm{U}^I \,,
\end{equation}
%---------------------------------
for every vector $\mathbf U$ and covector $\boldsymbol\vartheta$. From \eqref{eq:AB-components}, and since $\langle\hat{\boldsymbol\Omega}^B,\hat{\mathbf E}_A\rangle=\mathsf{A}_A{}^I\,\mathsf{B}_I{}^B$, this requirement admits the equivalent characterizations
%---------------------------------
\begin{equation}\label{Equi-Normalization-Compatibility}
	\hat\vartheta_A\,\hat{\mathrm U}^A=\vartheta_I\,\mathrm{U}^I
	\quad\Longleftrightarrow\quad
	\boldsymbol{\mathsf{B}}=\boldsymbol{\mathsf{A}}^{-1}
	\quad\Longleftrightarrow\quad
	\langle\hat{\boldsymbol\Omega}^B,\hat{\mathbf E}_A\rangle=\delta_A^B\,.
\end{equation}
%---------------------------------
A pair $(\boldsymbol{\mathsf{A}},\boldsymbol{\mathsf{B}})$ satisfying~\eqref{Normalization-Compatibility}---or any of the equivalent conditions of \eqref{Equi-Normalization-Compatibility}---is said to be \emph{dual-compatible} and may hence be designated by a single transformation. Dual compatibility is a characterization rather than a mere constraint: the last of \eqref{Equi-Normalization-Compatibility} states that $\{\hat{\boldsymbol\Omega}^A\}$ is the coframe dual to $\{\hat{\mathbf E}_A\}$, i.e. that the two normalization rules descend from a single frame field rather than from two unrelated ones. The component formulae \eqref{eq:AB-components} are then governed by the single map $\boldsymbol{\mathsf{A}}$,
%---------------------------------
\begin{equation}\label{eq:DC-components}
	\hat{\mathrm U}^A
	=(\mathsf{A}^{-1})_I{}^A\,\mathrm{U}^I\,,
	\quad
	\hat\vartheta_A
	=\mathsf{A}_A{}^I\,\vartheta_I\,,
\end{equation}
%---------------------------------
and, more generally, for a tensor $\mathbf T=T^{I_1\ldots I_r}{}_{J_1\ldots J_s}\,\partial_{I_1}\otimes\cdots\otimes\partial_{I_r}\otimes dX^{J_1}\otimes\cdots\otimes dX^{J_s}$, the normalized components
%---------------------------------
\begin{equation}\label{eq:general-normalized-tensor}
	\hat{T}^{A_1\ldots A_r}{}_{B_1\ldots B_s}
	=(\mathsf{A}^{-1})_{I_1}{}^{A_1}\cdots(\mathsf{A}^{-1})_{I_r}{}^{A_r}\,
	\mathsf{A}_{B_1}{}^{J_1}\cdots\mathsf{A}_{B_s}{}^{J_s}\,
	T^{I_1\ldots I_r}{}_{J_1\ldots J_s}\,,
\end{equation}
%---------------------------------
are obtained by contracting each contravariant index with $\boldsymbol{\mathsf{A}}^{-1}$ and each covariant index with $\boldsymbol{\mathsf{A}}$. The two independent gauge freedoms collapse into one: $\boldsymbol{\mathsf{B}}$ is determined by $\boldsymbol{\mathsf{A}}$, the two transformations being locked by $\boldsymbol{\mathsf{T}}=\boldsymbol{\mathsf{S}}^{-1}$, and a single dimensionless $GL(T_X\mathcal{M})$ gauge freedom remains---precisely the condition under which the pairing is gauge-invariant.
%---------------------------------
%---------------------------------
\subsubsection{Unit length and orthonormality}
The second requirement noted above, which consists of removing the freedom to rescale the frame and coframe, and hence the resulting ambiguity in the magnitudes of the normalized components, may be achieved by setting the frame and coframe to be of unit length:
%---------------------------------
\begin{subequations}\label{eq:unit-length}
\begin{alignat}{3}
	&\text{(N)}\colon
	&\quad
	\big\lVert\hat{\mathbf E}_A\big\rVert_{\M}^{2}
	&=\mathsf{A}_A{}^I\,\mathsf{A}_A{}^J\,\cM_{IJ}=1\,,
	&\quad
	&\text{(no summation on $A$)}\,,
	\label{eq:unit-length-N}
	\\
	&\text{(N}^*\text{)}\colon
	&\quad
	\big\lVert\hat{\boldsymbol\Omega}^A\big\rVert_{\M^\sharp}^{2}
	&=\mathsf{B}_I{}^A\,\mathsf{B}_J{}^A\,\cM^{IJ}=1\,,
	&\quad
	&\text{(no summation on $A$)}\,.
	\label{eq:unit-length-Nstar}
\end{alignat}
\end{subequations}
%---------------------------------
Under dual compatibility, dimensional consistency is equivalently set either by the frame or its dual. Unit length, however, is not similarly inherited, since duality is a statement on the natural pairing and not the metric. As a matter of fact, the coframe dual to a frame of unit length need not, in general, be of unit length. 
Conditions (N) and (N$^*$) must therefore be imposed separately. The following proposition characterizes when they can hold simultaneously under dual compatibility.
%-----------------------------
\begin{prop}\label{Prop:Rigidity}
For a dual-compatible normalization transformation $\boldsymbol{\mathsf{A}}$, the following are equivalent:
	(i) \emph{(N)} \& \emph{(N$^*$)} hold simultaneously;
	(ii) $\{\hat{\mathbf E}_A\}$ is $\M$-orthonormal;
	(iii) $\{\hat{\boldsymbol\Omega}^A\}$ is $\M^\sharp$-orthonormal.
\end{prop}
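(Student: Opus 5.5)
The plan is to work at a single point $X$ and encode everything in the Gram matrix $\mathsf{G}_{AB}\coloneq\llangle\hat{\mathbf E}_A,\hat{\mathbf E}_B\rrangle_{\M}=\mathsf{A}_A{}^I\mathsf{A}_B{}^J\cM_{IJ}$, that is, $\mathsf{G}=\boldsymbol{\mathsf{A}}\M\boldsymbol{\mathsf{A}}^\star$. By dual compatibility \eqref{Equi-Normalization-Compatibility} we have $\boldsymbol{\mathsf{B}}=\boldsymbol{\mathsf{A}}^{-1}$. Hence the Gram matrix of the coframe is $\llangle\hat{\boldsymbol\Omega}^A,\hat{\boldsymbol\Omega}^B\rrangle_{\M^\sharp}=(\mathsf{A}^{-1})_I{}^A(\mathsf{A}^{-1})_J{}^B\cM^{IJ}$, i.e. $\boldsymbol{\mathsf{A}}^{-\star}\M^{-1}\boldsymbol{\mathsf{A}}^{-1}=\mathsf{G}^{-1}$. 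This identity is the key structural step, and I will check it directly from the index formulas. In this language, (N) says $\mathsf{G}_{AA}=1$ for all $A$, (N$^*$) says $(\mathsf{G}^{-1})_{AA}=1$ for all $A$, (ii) says $\mathsf{G}=\mathbf I$, and (iii) says $\mathsf{G}^{-1}=\mathbf I$.

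The implications (ii)$\Leftrightarrow$(iii) are immediate, since $\mathsf{G}=\mathbf I$ if and only if $\mathsf{G}^{-1}=\mathbf I$. Also, (ii) implies (i) trivially, because both diagonals are then $1$.

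The substantive step is (i)$\Rightarrow$(ii): a symmetric positive-definite matrix $\mathsf{G}$ with unit diagonal whose inverse also has unit diagonal must be the identity. I plan to diagonalize $\mathsf{G}=\mathsf{Q}\,\mathrm{diag}(\mu_1,\dots,\mu_n)\,\mathsf{Q}^\top$ with $\mathsf{Q}$ orthogonal and $\mu_k>0$. Then
\[
\mathsf{G}_{AA}=\sum_k \mathsf{Q}_{Ak}^2\mu_k, \qquad (\mathsf{G}^{-1})_{AA}=\sum_k\mathsf{Q}_{Ak}^2\mu_k^{-1},
\]
with weights $\mathsf{Q}_{Ak}^2$ summing to $1$. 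Adding the two conditions gives $\sum_k \mathsf{Q}_{Ak}^2(\mu_k+\mu_k^{-1}-2)=0$. Each term is nonnegative because $\mu+\mu^{-1}\ge2$, so every $\mu_k$ with $\mathsf{Q}_{Ak}\neq0$ equals $1$. Since every column $k$ of $\mathsf{Q}$ has some nonzero entry, every $\mu_k=1$, and therefore $\mathsf{G}=\mathbf I$. (Alternatively, Cauchy--Schwarz gives $1=(\mathbf e_A^\top\mathbf e_A)^2\le(\mathbf e_A^\top\mathsf{G}\mathbf e_A)(\mathbf e_A^\top\mathsf{G}^{-1}\mathbf e_A)=1$, with equality forcing $\mathsf{G}\mathbf e_A\parallel\mathbf e_A$ and hence $\mathsf{G}\mathbf e_A=\mathbf e_A$.)

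This eigenvalue/AM--GM step is the only nontrivial one. The main care point is making sure the coframe Gram matrix is exactly $\mathsf{G}^{-1}$ under the paper's index conventions, which is where dual compatibility enters. Everything is pointwise, so the result holds for the fields.
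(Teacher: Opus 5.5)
Your proposal is correct and follows essentially the same route as the paper. Both proofs pass to the frame Gram matrix, whose inverse is the coframe Gram matrix once $\boldsymbol{\mathsf{B}}=\boldsymbol{\mathsf{A}}^{-1}$, and both conclude from $\mu+\mu^{-1}\geq 2$ that every eigenvalue equals one. The only difference is that the paper sums over all diagonal entries and compares traces, whereas you apply the inequality one diagonal entry at a time with the weights $\mathsf{Q}_{Ak}^2$, which is a slightly finer version of the same argument.
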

%-----------------------------
\begin{proof}
Let $\mathsf{H}_{AB}\coloneq\llangle\hat{\mathbf E}_A,\hat{\mathbf E}_B\rrangle_{\M}=\mathsf{A}_A{}^I\,\mathsf{A}_B{}^J\,\cM_{IJ}$ denote the Gram matrix of the frame, a symmetric positive-definite matrix. By dual compatibility ($\boldsymbol{\mathsf{B}}=\boldsymbol{\mathsf{A}}^{-1}$), and the Gram matrix of the coframe reads
%---------------------------------
\begin{equation}\label{eq:coframe-gram}
	\llangle\hat{\boldsymbol\Omega}^A,
	\hat{\boldsymbol\Omega}^B\rrangle_{\M^\sharp}
	=(\mathsf{A}^{-1})_I{}^A\,\mathsf{A}^{-1})_J{}^B\,\cM^{IJ}
	=(\mathsf{H}^{-1})^{AB}\,.
\end{equation}
%---------------------------------
The equivalence of (ii) and (iii) is then immediate, $\llbracket\mathsf{H}_{AB}\rrbracket=\llbracket\delta_{AB}\rrbracket$ if and only if $\llbracket(\mathsf{H}^{-1})^{AB}\rrbracket=\llbracket\delta^{AB}\rrbracket$, and either implies (i). Conversely, let $\lambda_1,\ldots,\lambda_n>0$ denote the eigenvalues of $\llbracket\mathsf{H}_{AB}\rrbracket$. Conditions (N) and (N$^*$) fix the diagonal entries of $\llbracket\mathsf{H}_{AB}\rrbracket$ and $\llbracket(\mathsf{H}^{-1})^{AB}\rrbracket$ to unity by \eqref{eq:coframe-gram}, and hence both traces to $n$, so that
%---------------------------------
\begin{equation}
	\label{eq:sum_eigen}
	\sum_{k=1}^{n}\Big(\lambda_k+\frac{1}{\lambda_k}\Big)
	=\operatorname{tr}\llbracket\mathsf{H}_{AB}\rrbracket
	+\operatorname{tr}\llbracket(\mathsf{H}^{-1})^{AB}\rrbracket
	=2n\,.
\end{equation}
%---------------------------------
Since $\lambda+\lambda^{-1}\geqslant2$ for every $\lambda>0$, with equality if and only if $\lambda=1$, it follows from~\eqref{eq:sum_eigen} that every eigenvalue equals unity. Real and symmetric, $\llbracket\mathsf{H}_{AB}\rrbracket$ is orthogonally diagonalizable, $\llbracket\mathsf{H}_{AB}\rrbracket=\llbracket\mathsf{Q}\rrbracket\,\llbracket\Lambda\rrbracket\,\llbracket\mathsf{Q}\rrbracket^{\top}$ with $\llbracket\mathsf{Q}\rrbracket$ orthogonal and $\llbracket\Lambda\rrbracket=\llbracket\delta_{AB}\rrbracket$ the diagonal matrix of its eigenvalues. Hence, $\llbracket\mathsf{H}_{AB}\rrbracket=\llbracket\mathsf{Q}\rrbracket\,\llbracket\mathsf{Q}\rrbracket^{\top}=\llbracket\delta_{AB}\rrbracket$, and the frame is $\M$-orthonormal.
\end{proof}
%-----------------------------
%---------------------------------
\subsubsection{From normalized components to physical components}
Together with the inherent dimensional consistency of normalization transformations, the two additional requirements justify the designation \emph{physical components}: dimensional consistency gives the components the same physical dimensions as the tensor field, dual compatibility ensures consistency between the normalized contravariant and covariant components, and unit normalization of the frame and coframe removes the arbitrary scaling of the components associated with rescaling their frame and coframe elements.
By Proposition~\ref{Prop:Rigidity}, the physically admissible transformations are precisely those dual-compatible that orthonormalize the coordinate frame (${\mathsf{A}_A{}^I\,\mathsf{A}_B{}^J\,\cM_{IJ}=\delta_{AB}}$), and the residual $GL(T_X\mathcal{M})$ ambiguity is thereby reduced to the orthogonal subgroup $O(T_X\mathcal{M},\M(X))$: every admissible transformation is of the form $\mathsf{A}_A{}^I=\mathsf{T}_A{}^B\,\cL_B{}^I$ for some $\mathsf{T}(X)\in O(T_X\mathcal{M},\M(X))$, the Gram--Schmidt map $\Lt$ of~\eqref{eq:beta-explicit} being one representative, singled out by the construction rather than by the conditions and no more distinguished than any other. The orthogonal gauge freedom of Remark~\ref{Remark:Frame-dep} is thus not an artifact of that construction: it is what remains of the ambiguity of the general framework once all three conditions are imposed.
%-----------------------------
\begin{remark}[The construction of Aris and its dual]
\label{Remark:Aris}
\citet{Aris1962} extended physical components to non-orthogonal coordinates by normalizing each coordinate basis vector,
%---------------------------------
\begin{equation}
\label{eq:Aris-frame}
	\hat{\mathbf E}_{(A)}
	=\frac{1}{\sqrt{\cM_{AA}}}\,\partial_A
	\quad\text{(no summation on $A$)}\,,
\end{equation}
%---------------------------------
so that for a vector $\mathbf U=\mathrm{U}^I\,\partial_I$, physical components are defined by
%---------------------------------
\begin{equation}\label{eq:Aris_contr}
	\mathrm U_{(A)}=\sqrt{\cM_{AA}}\,\mathrm{U}^A
	\quad\text{(no summation on $A$)}\,.
\end{equation}
%---------------------------------
For a covector ($1$-form) $\boldsymbol\vartheta=\vartheta_I\,dX^I$, one first raises the index and then applies the same rule,
%---------------------------------
\begin{equation}\label{eq:Aris_cov}
	\vartheta_{(A)}=\sqrt{\cM_{AA}}\,\cM^{AI}\vartheta_I
	\quad\text{(no summation on $A$)}\,.
\end{equation}
%---------------------------------
In the present notation, \eqref{eq:Aris_contr} and \eqref{eq:Aris_cov} correspond to the pair
%---------------------------------
\begin{equation}
\label{eq:Aris-AB}
	\mathsf{A}_A{}^I=\frac{1}{\sqrt{\cM_{AA}}}\,\delta_A^I\,,
	\quad
	\mathsf{B}_I{}^A=\frac{1}{\sqrt{\cM_{AA}}}\,\cM_{IA}
	\quad\text{(no summation on $A$)}\,,
\end{equation}
%---------------------------------
whose generated coframe is
%---------------------------------
\begin{equation}\label{eq:Aris-coframe}
	\hat{\boldsymbol\Omega}^A
	=\frac{1}{\sqrt{\cM_{AA}}}\,\cM_{IA}\,dX^I
	=\big(\hat{\mathbf E}_{(A)}\big)^{\flat}
	\quad\text{(no summation on $A$)}\,;
\end{equation}
%---------------------------------
Aris's covector rule is his vector rule applied to $\boldsymbol\vartheta^\sharp$, and the coframe it generates is the metric-dual of his frame rather than its dual coframe. The pair \eqref{eq:Aris-AB} consists of normalization transformations and satisfies (N) and (N$^*$)---the musical isomorphisms being isometries---but it is not dual-compatible:
%---------------------------------
\begin{equation}
\label{eq:Aris-pairing}
	\langle\hat{\boldsymbol\Omega}^B,\hat{\mathbf E}_{(A)}\rangle
	=\frac{\cM_{AB}}{\sqrt{\cM_{AA}\,\cM_{BB}}} 
	\quad\text{(no summation on $A$ and $B$)}\,.
\end{equation}
%---------------------------------
Note that, by Proposition~\ref{Prop:Rigidity}, a pair satisfying (D), (N), and (N$^*$) would render the coordinate directions orthogonal. The underlying reason is that the flat map preserves length while duality preserves the pairing, the two coinciding precisely on an orthonormal frame; a construction referred to normalized non-orthogonal coordinate directions must therefore choose between them. Retaining Aris's frame $\{\hat{\mathbf E}_{(A)}\}$ from~\eqref{eq:Aris-frame} but pairing it with its genuine dual coframe restores dual compatibility (D) and vector normalization (N), while sacrificing covector normalization (N$^*$)---the dual coframe elements failing, in general, to be of unit length by \eqref{eq:coframe-gram} whenever the chart is non-orthogonal.

The mirror construction, which starts instead from the normalized coordinate coframe,
%---------------------------------
\begin{equation}
\label{eq:mirror-coframe}
	\hat{\boldsymbol\Omega}^A
	=\frac{1}{\sqrt{\cM^{AA}}}\,dX^A\,,
	\quad
	\hat{\mathbf E}_{(A)}=\big(\hat{\boldsymbol\Omega}^A\big)^{\sharp}
	\quad\text{(no summation on $A$)}\,,
\end{equation}
%---------------------------------
i.e. $\hat\vartheta_A=\sqrt{\cM^{AA}}\,\vartheta_A$ and $\hat{\mathrm U}^A=\sqrt{\cM^{AA}}\,\cM_{AI}\,\mathrm{U}^I$ (no summation on $A$), satisfies (N) and (N$^*$) but fails (D) for the same reason as Aris's original construction. Note that it does not coincide with it, so that relinquishing dual compatibility does not single out an alternative. Retaining the coframe $\{\hat{\boldsymbol\Omega}^A\}$ from~\eqref{eq:mirror-coframe} but pairing it with its genuine dual frame instead of its metric-dual restores (D) and (N$^*$) while sacrificing (N)---the dual frame elements failing, in general, to be of unit length, by the same argument as in \eqref{eq:coframe-gram}, whenever the chart is non-orthogonal. Each of the four constructions satisfies two of the three conditions; none satisfies all three (D), (N), and (N$^*$).
\end{remark}
%-----------------------------

In summary, if physical components are understood merely as normalized components---that is, components carrying the physical dimensions of the underlying tensor field and nothing further---they remain ambiguous under independent dimensionless $GL(T_X\mathcal{M})$ gauge transformations acting on the contravariant and covariant indices. Requiring the natural pairing to retain its component form relates these two gauge freedoms, leaving a single $GL(T_X\mathcal{M})$ gauge freedom. Requiring, in addition, both the normalized frame and its dual coframe to be of unit length reduces the remaining freedom to the orthogonal subgroup $O(T_X\mathcal{M},\M(X))$. Together, these requirements reduce the general normalization framework to the orthonormal-frame construction of \S\ref{Subsection:general}. Its residual orthogonal gauge freedom is therefore not an artifact of the Gram--Schmidt procedure, but precisely the ambiguity left undetermined by dimensional consistency, dual compatibility, and unit normalization. Constructions retaining greater freedom necessarily relinquish at least one of these requirements, as illustrated by the alternatives in Remark~\ref{Remark:Aris}: they sacrifice either compatibility between the normalized contravariant and covariant components or unit normalization of the frame or coframe. Thus, up to a local orthogonal transformation, the construction adopted in this paper is the only one whose components have the same physical dimensions as the tensor field, preserve the component expression of the natural pairing, and are defined with respect to a unit-normalized frame and coframe. The remaining orthogonal ambiguity cannot be eliminated without introducing an additional choice of frame not determined by these requirements. Orthogonal coordinates provide such a preferred frame through their coordinate directions, whereas general coordinates do not.
%-----------------------------
%-----------------------------
\section{Physical components in elasticity and inelasticity}
\label{Section:Anelasticity}
In this section, we first briefly discuss the geometric framework of nonlinear elasticity and anelasticity and then proceed to examine physical components, as introduced in \S\ref{Section:Physical-Components}, within the setting of this framework. Inelastic processes beyond anelasticity are treated in \S\ref{Sec:Beyond-Anelasticity}, where viscoelasticity and visco-anelasticity are worked out.
%--------------------------------------------------------------
\subsection{Overview of geometric elasticity and anelasticity}
\label{S:overview}
In what follows, we give a terse account on nonlinear elasticity and anelasticity---e.g., \citep{MarsdenHughes1983, YavariSozio2023, SaYa2025GenColemanNoll}.
%--------------------------------------------------------------
\paragraph{Body, ambient space, and motion.}
We model the elastic body as a smooth three-dimensional manifold $\mathcal{B}$ embedded in the Euclidean ambient space $\mathcal{S} = \mathbb{R}^3$. Its motion is described by a time-dependent deformation map $\varphi_t : \mathcal{B} \to \mathcal{S}$, which carries each material point $X \in \mathcal{B}$ to a spatial location $x = \varphi_t(X) \in \mathcal{S}$. With the motion separating the referential and spatial pictures, we label quantities in the two accordingly: referential objects are denoted by uppercase characters and spatial ones with lowercase characters. We work in local charts $\{X^A\}$ on $\mathcal{B}$ and $\{x^a\}$ on $\mathcal{S}$, with corresponding coordinate frames $\{\partial/\partial X^A\}$ and $\{\partial/\partial x^a\}$ and corresponding reciprocal coframes $\{dX^A\}$ and $\{dx^a\}$. Repeated indices are summed in the sense of Einstein notation, i.e. $\mathrm{u}^i \mathrm v_i \coloneq \sum_i \mathrm{u}^i \mathrm v_i$.
The ambient space is equipped with the flat (Euclidean) metric $\g = \cg_{ab}\, dx^a \otimes dx^b$, whose components reduce to the Kronecker delta, $\cg_{ab} = \delta_{ab}$, in Cartesian coordinates. For two spatial vectors $\mathbf{u}, \mathbf{w} \in T_x\mathcal{S}$, the metric inner product is $\llangle \mathbf{u}, \mathbf{w} \rrangle_{\g} =\mathrm{u}^a \mathrm{w}^b \cg_{ab}$, while the duality pairing of a $1$-form $\boldsymbol{\alpha} \in T^*_x\mathcal{S}$ with a vector reads $\langle \boldsymbol{\alpha}, \mathbf{u} \rangle = \alpha_a \mathrm{u}^a$. The associated Riemannian volume form is $dv = \sqrt{\det \g}\, dx^1 \wedge dx^2 \wedge dx^3$, and we write $\bar\nabla$ for the Levi-Civita connection of $(\mathcal{S}, \g)$, with Christoffel symbols ${\gamma^a}_{bc}$.
%--------------------------------------------------------------
\paragraph{Euclidean reference configuration.}
Pulling the ambient metric back to the body through the inclusion map $\iota: \mathcal{B} \hookrightarrow \mathcal{S}$ endows $\mathcal{B}$ with a Euclidean reference metric $\Go = \iota^* \g$; equivalently, $\Go$ is the restriction of $\g$ to $\mathcal{B}$. Componentwise, $\Go = \cGo_{AB}\, dX^A \otimes dX^B$.
For instance, $\Go = dR \otimes dR + R^2\, d\Theta \otimes d\Theta + dZ \otimes dZ$ in cylindrical coordinates.
It should be stressed that $\Go$ records only the geometry that $\mathcal{B}$ inherits from its embedding and need not coincide with the body's intrinsic, stress-free geometry, which anelastic eigenstrains may render different. The $\Go$-inner product and its duality pairing on $T_X\mathcal{B}$ are denoted similarly to their spatial analogues, the volume form is $d\mathring{V} = \sqrt{\det \Go}\, dX^1 \wedge dX^2 \wedge dX^3$, and the Levi-Civita connection of $(\mathcal{B}, \Go)$ is denoted $\mathring{\nabla}$, with Christoffel symbols ${\mathring{\Gamma}^A}_{BC}$.
%--------------------------------------------------------------
\paragraph{Deformation gradient.}
The local deformation may be quantified by the deformation gradient, defined as the tangent (derivative) map of the motion $\varphi$; it reads $\F(X,t) = T\varphi_t(X)$ over the domains $T_X\mathcal{B} \to T_{\varphi_t(X)}\varphi_t(\mathcal{B})$. Being the differential of a map between two manifolds, $\F$ is a two-point tensor: it sends material tangent vectors to spatial ones. Its components are given by the partial derivatives of the motion yielding the following coordinate representation ${\F(X) = \cF^a{}_A(X)\, \partial_a \otimes dX^A}$.

%--------------------------------------------------------------
\paragraph{Material configuration in finite anelasticity.}
%-----------------------------
\begin{figure}
\centerline{%
\xymatrix@C=1.5cm@R=1.0cm{
(T_X\mathcal{B},\mathbf{G})
\ar[rr]^{\mathbf{F}}
\ar[dr]_{\Fa}
&&
(T_x\mathcal{S},\mathbf{g})
\\
&
(T_X\mathcal{B},\mathring{\mathbf{G}})
\ar[ur]_{\Fe=\Fa_*\mathbf{F}}
&
}}
\caption[Elastic and anelastic distortions.]{Elastic and anelastic distortions and their associated metrics. The anelastic distortion $\Fa$ is an isometry from the material configuration $(\mathcal{B},\G)$ onto the Euclidean configuration $(\mathcal{B},\Go)$. Physical components of $\F$ are referred to $\G$, those of $\Fe$ to $\Go$, and the spatial leg of both is referred to $\g$.}
\label{Fig:distortions}
\end{figure}
%-----------------------------
In the presence of eigenstrains\textemdash intrinsic distortions arising from plasticity, growth, swelling, or thermal expansion\textemdash the natural configuration of the body generally admits no isometric embedding in the Euclidean ambient space, and relaxing the body in physical space leaves it residually stressed. Residual stress is thus an expression of the non-Euclidean character of the natural configuration, which must accordingly be endowed with a material metric $\G$ distinct from $\Go$. Anelastic effects are built into the kinematics through the Bilby--Kr\"oner--Lee (BKL) multiplicative split of the deformation gradient \citep{bilby1955, kroner1959, leeliu1967, Sadik2017} $\F = \Fe \Fa$,
where the anelastic factor $\Fa: T_X\mathcal{B} \to T_X\mathcal{B}$ encodes the local, stress-relieving distortion produced by the eigenstrains, mapping the Euclidean reference configuration $(\mathcal{B}, \Go)$ to the stress-free material configuration $(\mathcal{B}, \G)$, whereas the elastic factor $\Fe: T_X\mathcal{B} \to T_x\mathcal{S}$ describes the recoverable elastic deformation from the material configuration to the current one---see Fig.~\ref{Fig:distortions}. While the total deformation gradient is compatible, $\F = T\varphi$, the individual distortions $\Fe$ and $\Fa$ are in general incompatible\textemdash they are not derivatives of smooth mappings\textemdash and this incompatibility is precisely the failure of the locally relaxed state to be assembled into a global embedding in physical space. The natural, stress-free lengths reside on the resulting abstract \emph{material manifold}, whose geometry is carried by the material metric obtained by pulling $\Go$ back along $\Fa$,\footnote{To see this, consider a curve $\gamma: \mathfrak{I} \to \mathcal{B}$, which is generally stressed in $(\mathcal{B}, \Go)$ but whose push-forward $\Fa_* \gamma$ is locally stress-free. Its squared arc-length element in the natural configuration reads $\llangle \Fa \gamma'(t), \Fa \gamma'(t) \rrangle_{\Go} = \llangle \gamma'(t), \gamma'(t) \rrangle_{\Fa^* \Go}$\textemdash see \citep{Yavari2021Eshelby}.}
%---------------------
\begin{equation}
\G = \Fa^* \Go\,, \quad \text{i.e.} \quad \cG_{AB} = \cFa^I{}_A\, \cGo_{IJ}\, \cFa^J{}_B\,.
\end{equation}
%---------------------
The $\G$-inner product on $T_X\mathcal{B}$ is denoted in a similar manner, the material volume form is denoted $dV = {\sqrt{\det \G}\, dX^1 \wedge dX^2 \wedge dX^3}$, and $\nabla$ (with Christoffel symbols ${\Gamma^A}{}_{BC}$) denotes the Levi-Civita connection of $(\mathcal{B}, \G)$. The Jacobian $J$ is defined by requiring that it relate the stress-free (material) and spatial volume elements through $\varphi^* dv = J\, dV$; one finds
%---------------------
\begin{equation}
\label{eq:Jacobian}
J = \sqrt{\frac{\det\g}{\det\G}}\, \det\F = \sqrt{\frac{\det\g}{\det\Go}}\, \det\Fe\,.
\end{equation}
%---------------------
Finally, several sources of eigenstrain may coexist within a solid, each contributing a distinct distortion field, so that the total anelastic distortion factorizes as $\Fa = \prod_{j=1}^{N} \accentset{j}{\mathbf{F}} = \accentset{1}{\mathbf{F}} \hdots \accentset{N}{\mathbf{F}}$.
%--------------------------------------------------------------
\paragraph{Material configuration in nonlinear elasticity.}
Nonlinear elasticity is the eigenstrain-free case, and it is recovered by setting ${\Fa = \mathbf{id}_{T_X\mathcal{B}}}$---the identity map on $T_X\mathcal{B}$, such that the elastic distortion carries the entire deformation, $\Fe = \F$. The material metric then degenerates to the induced Euclidean one, $\G = \Fa^*\Go = \Go$, and the material manifold collapses onto the Euclidean reference configuration $(\mathcal{B}, \Go)$: the natural, stress-free state is globally embeddable in the ambient space, and the body relaxes without residual stress. Accordingly, $dV = d\mathring{V}$, $\nabla = \mathring\nabla$, and the Jacobian~\eqref{eq:Jacobian} reduces to $J = \mathring{J} \coloneq \sqrt{\det\g/\det\Go}\, \det\F$. Throughout, we retain the notation $(\mathcal{B}, \Go)$ for the Euclidean reference configuration and $(\mathcal{B}, \G)$ for the material one, the two coinciding precisely in this elastic case.
%--------------------------------------------------------------
%--------------------------------------------------------------
\paragraph{Musical isomorphisms.}
Recall the musical isomorphisms introduced in \S\ref{S:Phys_D} as metric-dependent instruments.
Therefore, in the framework of anelasticity, where the reference body carries two metrics, $\Go$ and $\G$, one has two such pairs of musical isomorphisms, which we label $\flato, \sharpo$ and $\flat, \sharp$, respectively. In particular, one has $\Go^\sharpo=\Go^{-1}$ and $\G^\sharp=\G^{-1}$.
The ambient space, by contrast, carries only $\g$, so a single pair suffices there and we write it simply as $\flat$ and $\sharp$, with no risk of ambiguity. In particular, one writes $\g^\sharp=\g^{-1}$.
%--------------------------------------------------------------
\paragraph{Adjoint.}
The adjoint $\F^\star$ of the deformation gradient is the unique linear map characterized by the duality identity ${\langle \boldsymbol{\alpha}, \F\mathbf{U} \rangle = \langle \F^\star \boldsymbol{\alpha}, \mathbf{U} \rangle}$, holding for all $\mathbf{U} \in T_X\mathcal{B}$ and all $\boldsymbol{\alpha} \in T^*_{\varphi(X)}\mathcal{S}$. Since it is built purely from the dual pairing, no metric is needed; in components $(\mathrm{F}^\star)^A{}_a = \mathrm{F}^a{}_A$, that is, $\F^\star \circ \varphi(X) = \cF^a{}_A(X)\, dX^A \otimes \partial_a$.
%--------------------------------------------------------------
\paragraph{Transpose.}
Unlike the adjoint, the transpose of $\F$ is metric-dependent, and in the reference configuration the choice between $\Go$ and $\G$ matters. Pairing through $\g$ on the spatial side and $\Go$ on the material side, the transpose $\F^{\mathring{\mathsf{T}}}$ is defined by
$\llangle \F \mathbf{U}, \mathbf{u} \rrangle_{\g} = \llangle \mathbf{U}, \F^{\mathring{\mathsf{T}}} \mathbf{u} \rrangle_{\Go}$,
which yields $\F^{\mathring{\mathsf{T}}} = \Go^\sharpo \F^\star \g$, or $(\mathrm{F}^{\mathring{\mathsf{T}}})^A{}_a = \cGo^{AB} \cF^b{}_B \cg_{ba}$ in components. Repeating the construction with $\G$ in place of $\Go$,
$\llangle \F \mathbf{U}, \mathbf{u} \rrangle_{\g} = \llangle \mathbf{U}, \F^{\mathsf{T}} \mathbf{u} \rrangle_{\G}$,
gives instead $\F^{\mathsf{T}} = \G^\sharp \F^\star \g$, with components $(\mathrm{F}^{\mathsf{T}})^A{}_a = \cG^{AB} \cF^b{}_B \cg_{ba}$.
%--------------------------------------------------------------
\paragraph{Derived strain measures.}
Building on the constructions above, several standard (total) strain measures may now be assembled
%---------------------
\begin{itemize}[topsep=3pt, itemsep=0pt, leftmargin=15pt]
 \item {Piola deformation tensor}: $ \mathbf{B} = \varphi^* \g^\sharp = \F^{-1} \g^\sharp \F^{-\star} $, with components $ \mathrm B^{AB} = (\cF^{-1})^A{}_a \cg^{ab} (\cF^{-1})^B{}_b$;
 \item {right Cauchy--Green tensor}: $ \C = \varphi^* \g = \F^\star \g \F $, with components $ \mathrm C_{AB} = \cF^a{}_A \cg_{ab} \cF^b{}_B $;
 \item {left Cauchy--Green (Finger) tensor}: $ \fb = \varphi_* \G^\sharp = \F \G^\sharp \F^\star $, with components $ \mathrm b^{ab} = \cF^a{}_A \cG^{AB} \cF^b{}_B $;
 \item {inverse Finger tensor}: $ \ic = \varphi_* \G = \F^{-\star} \G \F^{-1} $, with components $ \mathrm c_{ab} = (\cF^{-1})^A{}_a \cG_{AB} (\cF^{-1})^B{}_b $.
\end{itemize}
%---------------------
In anelasticity, one may additionally define the following derived measure:
%---------------------
\begin{itemize}[topsep=3pt, itemsep=0pt, leftmargin=15pt]
 \item {elastic Piola deformation tensor}: $ \Be = \Fe^* \g^\sharp = \Fe^{-1} \g^\sharp \Fe^{-\star} $, with components $ \cBe^{AB} = (\cFe^{-1})^A{}_a \cg^{ab} (\cFe^{-1})^B{}_b $;
 \item {elastic right Cauchy--Green tensor}: $ \Ce = \Fe^* \g = \Fe^\star \g \Fe $, with components $ \cCe_{AB} = \cFe^a{}_A \cg_{ab} \cFe^b{}_B $;
 \item {elastic left Cauchy--Green (Finger) tensor}: $ \be = \Fe_* \Go^\sharpo = \Fe \Go^\sharpo \Fe^\star $, with components $ \cbe^{ab} = \cFe^a{}_A \cGo^{AB} \cFe^b{}_B $;
 \item {elastic inverse Finger tensor}: $ \ce = \Fe_* \Go = \Fe^{-\star} \Go \Fe^{-1} $, with components $ \cce_{ab} = (\cFe^{-1})^A{}_a \cGo_{AB} (\cFe^{-1})^B{}_b $.
\end{itemize}
%---------------------
%-----------------------------
\begin{remark}
\label{rmrk:StrainTransport}
Note that since $\mathbf{G}=\Fa^*\Go$, it follows that $\mathbf{B}=\Fa^*\Be$ and $\mathbf{C}=\Fa^*\Ce$, while $\fb=\be$ and $\ic=\ce$.
\end{remark}
%-----------------------------
%--------------------------------------------------------------
\paragraph{Constitutive assumptions.}
A solid is hyper-anelastic when its stress derives from a scalar energy function \citep{SaYa2025GenColemanNoll}. In anelasticity the eigenstrains encoded in $\Fa$ are stress-relieving, so that the natural configuration $(\mathcal{B}, \G)$ stores no energy and only the recoverable part of the deformation does: the specific free energy\textemdash measured per unit mass\textemdash is accordingly a function of the elastic distortion alone,
$\mathcal W = \mathring{\mathcal W}(X, \Theta, \Fe, \Go, \g\circ\varphi)$,
with $\Theta = \Theta(X,t)$ the temperature field and $\Fe$ reckoned against the Euclidean reference metric $\Go$, consistently with $\Fe: (T_X\mathcal{B}, \Go) \to (T_x\mathcal{S}, \g)$.\footnote{The specific free energy is the Legendre transform $\mathcal W = \mathcal E - \Theta\,\mathcal N$ of the specific internal energy $\mathcal E$ with respect to the conjugate pair $(\Theta, \mathcal N)$, $\mathcal N$ denoting the specific entropy, whence $\mathcal N = -\partial\mathcal W/\partial\Theta$; see \citep{SaYa2025GenColemanNoll}.} Frame indifference (objectivity) constrains this dependence, allowing the energy to be re-expressed through the elastic right Cauchy--Green tensor $\Ce$ as $\mathring{\mathcal W} = \hat{\mathring{\mathcal W}}(X, \Theta, \Ce, \Go)$. Nonlinear elasticity is the degenerate case $\Fa = \mathbf{id}_{T_X\mathcal{B}}$, in which the whole of the deformation is recoverable: $\Fe = \F$ and $\Ce = \C$, and energy becomes a function of the total deformation gradient, $\mathcal W = \mathring{\mathcal W}(X, \Theta, \F, \Go, \g\circ\varphi) = \hat{\mathring{\mathcal W}}(X, \Theta, \C, \Go)$\textemdash the classical hyperelastic free energy referred to the Euclidean reference configuration.
%--------------------------------------------------------------
\paragraph{Stress measures.}
The Cauchy stress follows from the free energy through the Doyle--Ericksen formula \citep{DoyleEricksen1956, SaYa2025GenColemanNoll}, which admits the equivalent representations\footnote{In the sense of definition \eqref{eq:(r,s)_tensor}: $\boldsymbol\sigma$ is a tensor of type $\binom{2}{0}$ and it said to be contravariant of order 2. However, by musical isomorphisms, one is able to construct all the other isometrically equivalent forms.}
%---------------------
\begin{equation} \label{eq:Cauchy-Anelast}
\boldsymbol\sigma
= \varrho\,\g^\sharp \frac{\partial \mathring{\mathcal W}}{\partial \Fe}\Fe^\star
= 2\varrho\, \Fe \frac{\partial \hat{\mathring{\mathcal W}}}{\partial \Ce} \Fe^\star\,,
\end{equation}
%---------------------
where $\varrho$ is the spatial mass density, i.e.\ the mass per unit deformed volume $dv$, related to the material mass density $\rho$\textemdash the mass per unit stress-free volume $dV$\textemdash by $\varrho = \rho/J$. The Cauchy stress is a spatial tensor, contravariant on both legs. Transporting one or both of its legs back to the reference configuration by means of the Piola transform yields the two Piola--Kirchhoff stresses. The first Piola--Kirchhoff stress $\mathbf{P}$ is the two-point tensor of type $\binom{1\ 1}{0\ 0}$ obtained by pulling back the second leg of $\boldsymbol\sigma$,
%---------------------
\begin{equation} \label{eq:FirstPK}
\mathbf{P} = J\,\boldsymbol\sigma\,\F^{-\star} = \rho\,\g^\sharp \frac{\partial \mathring{\mathcal W}}{\partial \Fe}\Fa^{-\star}\,, \quad \text{i.e.} \quad \mathrm{P}^{aA} = J\, \sigma^{ab}\, (\cF^{-1})^A{}_b\,,
\end{equation}
%---------------------
while the second Piola--Kirchhoff stress $\mathbf{S}$ is the fully referential tensor, contravariant of order two, obtained by pulling back both legs,
%---------------------
\begin{equation} \label{eq:SecondPK}
\mathbf{S} = \F^{-1}\mathbf{P} = J\, \F^{-1}\boldsymbol\sigma\,\F^{-\star}\,, \quad \text{i.e.} \quad \mathrm{S}^{AB} = J\, (\cF^{-1})^A{}_a\, \sigma^{ab}\, (\cF^{-1})^B{}_b\,,
\end{equation}
%---------------------
so that $\boldsymbol\sigma = J^{-1}\mathbf{P}\F^\star = J^{-1}\F\mathbf{S}\F^\star$ and $\mathbf{P} = \F\mathbf{S}$. Unlike $\boldsymbol\sigma$, both Piola--Kirchhoff stresses are measured per unit stress-free volume $dV$; correspondingly, the traction $\mathbf{T} = \mathbf{P}\mathbf{N}^\flat$ is reckoned per unit stress-free area, with $\mathbf{N}$ the $\G$-unit normal. Since $\F: (T_X\mathcal{B}, \G) \to (T_x\mathcal{S}, \g)$---see Fig.~\ref{Fig:distortions}, the referential legs of $\mathbf{P}$ and $\mathbf{S}$ are attached to the material configuration $(\mathcal{B}, \G)$ and their spatial legs to $(\mathcal{S}, \g)$. In the elastic limit $\Fa = \mathbf{id}_{T_X\mathcal{B}}$ the referential legs are attached instead to the Euclidean reference configuration, recovering the classical Piola--Kirchhoff stresses of nonlinear elasticity.
%--------------------------------------------------------------
\paragraph{Material symmetry.}
Relative to the Euclidean reference $(\mathcal{B}, \Go)$, the material symmetry group $\mathring{\mathcal G}_X$ of the solid at a point $X \in \mathcal{B}$ collects those $\mathring{\mathbf{K}} \in \mathrm{Orth}(\Go) = \{ \mathbf{Q}: T_X\mathcal{B} \to T_X\mathcal{B} ~|~ \mathbf{Q}^\star \Go \mathbf{Q} = \Go \}$ under which $\mathring{\mathcal W}$ is left unchanged, i.e.
%--------------------------------------------------------------
\begin{equation} \label{Elasticity-Sym-Group-Reform}
\mathring{\mathcal W}(X, \Theta, \mathring{\mathbf{K}}^* \Fe, \mathring{\mathbf{K}}^* \Go, \g)
= \mathring{\mathcal W}(X, \Theta, \Fe, \Go, \g)\,,
\end{equation}
%--------------------------------------------------------------
for an arbitrary elastic distortion $\Fe$.
The symmetries gathered in $\mathring{\mathcal{G}}_X$ can be represented by a family of structural tensors $\mathring{\boldsymbol{\Lambda}}$ that span the tensors left invariant by the group action.\footnote{A finite collection of structural tensors $\mathring{\boldsymbol{\Lambda}}_i$, $i = 1, \dots, N$, suffices to characterize the subgroup $\mathring{\mathcal{G}}_X \subseteq \mathrm{Orth}(\Go)$; see, e.g., \citep{liu1982, boehler1987, zheng1993, zheng1994theory, lu2000covariant, MazzucatoRachele2006}.} Promoting these tensors to independent arguments of the energy renders the constitutive law materially covariant \citep{Lu2012MatCov}: for every linear isomorphism $\boldsymbol{\mathsf{T}}: T_X\mathcal{B} \to T_X\mathcal{B}$ one then has
$\mathring{\mathcal W}(X, \Theta, \Fe, \mathring{\boldsymbol{\Lambda}}, \Go, \g)
= \mathring{\mathcal W}(X, \Theta, \boldsymbol{\mathsf{T}}^* \Fe, \boldsymbol{\mathsf{T}}^* \mathring{\boldsymbol{\Lambda}}, \boldsymbol{\mathsf{T}}^* \Go, \g)$.
Selecting the particular isomorphism $\boldsymbol{\mathsf{T}} = \Fa$ and invoking the identities $\Fa^* \Fe = \Fe \Fa = \F$ together with $\G = \Fa^* \Go = \Fa^\star \Go \Fa$ recasts the energy as
%--------------------------------------------------------------
\begin{equation} \label{Eq:FreeAnelast-Reform}
\mathcal{W}
= \mathcal{W}(X, \Theta, \F, \boldsymbol{\Lambda}, \G, \g)\,,
\end{equation}
%--------------------------------------------------------------
where $\boldsymbol{\Lambda} \coloneq \Fa^* \mathring{\boldsymbol{\Lambda}}$ denotes the structural tensors transported to the configuration $(\mathcal{B}, \G)$. Equation~\eqref{Eq:FreeAnelast-Reform} makes transparent the way in which the material metric $\G = \Fa^* \Go$ emerges of its own accord from the hyper-anelastic description. A further appeal to objectivity permits yet another equivalent form, $\mathcal W = \hat{\mathcal W}(X, \Theta, \C, \boldsymbol{\Lambda}, \G)$. The stress measures of the preceding paragraph may hence be represented equivalently in terms of the total deformation,
%--------------------------------------------------------------
\begin{equation} \label{eq:Stress-Total}
\boldsymbol\sigma
= \varrho\,\g^\sharp \frac{\partial {\mathcal W}}{\partial \F}\F^\star
= 2\varrho\, \F \frac{\partial \hat{{\mathcal W}}}{\partial \C} \F^\star\,, \quad
\mathbf{P} = \rho\,\g^\sharp \frac{\partial {\mathcal W}}{\partial \F}\,, \quad
\mathbf{S} = 2\rho\, \frac{\partial \hat{{\mathcal W}}}{\partial \C}\,,
\end{equation}
%--------------------------------------------------------------
the passage to~\eqref{eq:Stress-Total} from~\eqref{eq:Cauchy-Anelast}--\eqref{eq:SecondPK} being effected by the chain rule $\partial\mathcal{W}/\partial\F = (\partial\mathring{\mathcal W}/\partial\Fe)\,\Fa^{-\star}$. The nonlinear elasticity case follows for $\Fa = \mathbf{id}_{T_X\mathcal{B}}$: the transport is trivial, $\boldsymbol{\Lambda} = \mathring{\boldsymbol{\Lambda}}$ and $\G = \Go$, whence~\eqref{Eq:FreeAnelast-Reform} reduces to~\eqref{Elasticity-Sym-Group-Reform}. The material metric carries no information beyond the embedding geometry, and one recovers the classical notion of material symmetry referred to $(\mathcal{B}, \Go)$.
%--------------------------------------------------------------
%-----------------------------
\subsection{Physical components in nonlinear elasticity}
\label{Subsection:Phys-Elasticity}
Before taking up the anelastic setting, we discuss physical components in nonlinear elasticity, which we recall is the case $\Fa=\mathbf{id}_{T_X\mathcal{B}}$: the material metric coincides with the induced Euclidean metric, $\G=\Go$, and each configuration carries a single metric---$\Go$ on the reference body and $\g$ in the ambient space. Taking $\mathcal{M}=\mathcal{B}$, $\mathcal{N}=\mathcal{S}$, $\M=\Go$, and $\m=\g$, we let $\Lo$ and $\boldsymbol\ell$ denote the orthonormalization maps of~\eqref{eq:2pt-frames} on the reference and spatial legs, respectively. 
Since each leg is equipped with a single metric, the only freedom in the orthonormalization is the choice of frame described by~\eqref{eq:2pt-comp-change}, and orthogonality of $\{X^A\}$ and $\{x^a\}$ means, unambiguously, $\Go$- and $\g$-orthogonality.
%---------------------------------
\paragraph{Deformation gradient.}
Physical components of the deformation gradient $\F:(T_X\mathcal{B},\Go)\to(T_x\mathcal{S},\g)$ follow from~\eqref{eq:Phys-Comp-2pt-prototype} as
%---------------------------------
\begin{equation}
\label{eq:F-phys-elast}
	\hat{\cF}^a{}_A = (\ell^{-1})_b{}^a\,\cF^b{}_B\,\cLo_A{}^B\,,
\end{equation}
%---------------------------------
which in orthogonal coordinates reduces to
%---------------------------------
\begin{equation}
\label{eq:F-phys-elast-orth}
	\hat{\cF}^a{}_A = \sqrt{\frac{\cg_{aa}}{\cGo_{AA}}}\,\cF^a{}_A
	\quad (\text{no summation})\,,
\end{equation}
%---------------------------------
recovering the classical formula of the orthogonal-coordinate literature \citep{Truesdell1953physical,GreenZerna1954,Aris1962}. The components $\hat{\cF}^a{}_A$ are indeed dimensionless even if the coordinates carry mixed dimensions, e.g., in cylindrical coordinates.
%---------------------------------
\paragraph{Jacobian.}
Physical components also clear a familiar abuse of notation. It is common in the continuum
mechanics literature to write $J=\det\mathbf F$, which if taken literally is not well-defined:
a two-point tensor has no determinant independently of a choice of bases. 
Once bases are chosen, $\mathbf F=\cF^a{}_A\,\frac{\partial}{\partial x^a}\otimes dX^A$ is represented by the matrix $\llbracket\cF^a{}_A\rrbracket$, whose determinant is well defined. This matrix determinant is not, in general, equal to the Jacobian; from~\eqref{eq:Jacobian}, they differ by the metric factor $\sqrt{\det\mathbf g/\det\mathbf G}$, which equals unity only when both coordinate bases are orthonormal.
Referring the two legs of $\mathbf F$ to orthonormal frames absorbs precisely
this factor into the frames, so that the classical identity is restored---as it stands---by
physical components. This is the content of the following proposition, which requires
of the frames only orthonormality and positive orientation.
%-----------------------------
\begin{prop}
\label{prop:Jacobian}
Let $\varphi:\mathcal{B}\to\mathcal{S}$ be a deformation between two Riemannian manifolds $(\mathcal{B},\mathbf G)$ and $(\mathcal{S},\mathbf g)$, and let
%---------------------------------
\begin{equation}
	\mathbf F	= \cF^a{}_A\frac{\partial}{\partial x^a}\otimes dX^A\,,\quad 
	\cF^a{}_A=\frac{\partial \varphi^a}{\partial X^A}\,,
\end{equation}
%---------------------------------
be its deformation gradient in local coordinates $\{X^A\}$ on $\mathcal{B}$ and $\{x^a\}$ on $\mathcal{S}$. Let $\{\hat{\mathbf E}_A\}$ and $\{\hat{\mathbf e}_a\}$ be positively-oriented orthonormal frames related to the coordinate bases by
%---------------------------------
\begin{equation} \label{Frame-Transformation-Prop}
	\hat{\mathbf E}_A=\cL_A{}^B\frac{\partial}{\partial X^B}\,,\quad
	\hat{\mathbf e}_a=\ell_a{}^b\frac{\partial}{\partial x^b}\,,
\end{equation}
%---------------------------------
and let $\{\hat{\boldsymbol\Omega}^A\}$ and $\{\hat{\boldsymbol\omega}^a\}$ be their corresponding dual coframes, respectively.
If $\mathbf F=\hat{\cF}^a{}_A\,\hat{\mathbf e}_a\otimes \hat{\boldsymbol\Omega}^A$ denotes the representation of $\mathbf F$ in these orthonormal frames, then
%---------------------------------
\begin{equation} \label{J-FHat}
	J=\det\llbracket \hat{\cF}^a{}_A \rrbracket\,,
\end{equation}
%---------------------------------
where $J=	\det\llbracket {\cF} \rrbracket \sqrt{{\det\mathbf g}/{\det\mathbf G}}$ is the Jacobian of the deformation.
\end{prop}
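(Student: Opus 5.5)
The plan is to compute $\det\llbracket\hat{\cF}^a{}_A\rrbracket$ directly from the two-point transformation rule and then evaluate the determinants of the frame matrices using orthonormality. By \eqref{eq:Phys-Comp-2pt-prototype}, the physical components are $\hat{\cF}^a{}_A=(\ell^{-1})_b{}^a\,\cF^b{}_B\,\cL_A{}^B$. In matrix form this reads $\llbracket\hat{\cF}\rrbracket=\llbracket\ell^{-1}\rrbracket^{\top}\llbracket\cF\rrbracket\llbracket\cL\rrbracket^{\top}$, up to the convention for which index labels rows. Transposition does not affect determinants, so multiplicativity gives
\begin{equation*}
\det\llbracket\hat{\cF}\rrbracket=\frac{\det\llbracket\cL\rrbracket}{\det\llbracket\ell\rrbracket}\,\det\llbracket\cF\rrbracket\,.
\end{equation*}

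Everything therefore reduces to computing $\det\llbracket\cL\rrbracket$ and $\det\llbracket\ell\rrbracket$. Orthonormality of $\{\hat{\mathbf E}_A\}$ with respect to $\G$ means $\cL_A{}^C\,\cG_{CD}\,\cL_B{}^D=\delta_{AB}$, that is, $\llbracket\cL\rrbracket\llbracket\cG\rrbracket\llbracket\cL\rrbracket^{\top}=\mathbb{1}$. Taking determinants gives $(\det\llbracket\cL\rrbracket)^2\det\G=1$, so $\det\llbracket\cL\rrbracket=\pm(\det\G)^{-1/2}$. In the same way, $\det\llbracket\ell\rrbracket=\pm(\det\g)^{-1/2}$.

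The only delicate step is fixing these signs, and this is where positive orientation enters. Positive orientation of $\{\hat{\mathbf E}_A\}$ means it has the same orientation as the coordinate frame $\{\partial/\partial X^A\}$, taking the chart itself to be positively oriented. This is exactly $\det\llbracket\cL\rrbracket>0$, and likewise $\det\llbracket\ell\rrbracket>0$. Note that without this hypothesis the identity would hold only up to sign, which is why the proposition assumes it. Hence
\begin{equation*}
\det\llbracket\hat{\cF}\rrbracket=\sqrt{\frac{\det\g}{\det\G}}\,\det\llbracket\cF\rrbracket=J
\end{equation*}
by \eqref{eq:Jacobian}.

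Finally, I would add a short check that the result does not depend on which orthonormal frames are used. By \eqref{eq:2pt-comp-change}, changing frames multiplies $\llbracket\hat{\cF}\rrbracket$ by orthogonal matrices $\llbracket\mathrm q^{-1}\rrbracket$ and $\llbracket\mathrm Q\rrbracket$. Orientation preservation forces both to lie in $SO$, so the determinant is unchanged. This is consistent with the computation above, though not needed for it.
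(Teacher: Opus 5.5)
Your proposal is correct and follows the paper's proof essentially step for step. You factor $\det\llbracket\hat{\cF}\rrbracket$ through $\hat{\cF}^a{}_A=(\ell^{-1})_b{}^a\,\cF^b{}_B\,\cL_A{}^B$, obtain $(\det\Lt)^2\det\G=1$ and $(\det\boldsymbol\ell)^2\det\g=1$ from orthonormality, and fix the signs by positive orientation; your closing remark that the determinant is invariant under an $SO\times SO$ change of frames is a correct addition not made in the paper.
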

\begin{proof}
We know that $\hat{\cF}^a{}_A=\big(\ell^{-1}\big)_b{}^a\,\cF^b{}_B\,\cL_A{}^B$. Therefore, taking determinants gives
%---------------------------------
\begin{equation} \label{Det-Hat-F}
	\det\llbracket \hat{\cF}^a{}_A \rrbracket
	=\det\llbracket \big(\ell^{-1}\big)_b{}^a\rrbracket \det\llbracket {\cF} \rrbracket \det\llbracket \cL_A{}^B \rrbracket\,.
\end{equation}
%---------------------------------
We now compute the determinants of the orthonormalization matrices. Since $\{\hat{\mathbf e}_a\}$ is orthonormal,
$\delta_{ab} = \llangle \hat{\mathbf e}_a,\hat{\mathbf e}_b\rrangle_{\mathbf g}
= \ell_a{}^c\,\ell_b{}^d\,\cg_{cd}$.
Taking determinants, one obtains $(\det \boldsymbol{\ell})^2 \det \mathbf g=1$.
Because the spatial orthonormal frame is assumed to be positively oriented,
$\det \boldsymbol{\ell}^{-1}=1/\det \boldsymbol{\ell}=\sqrt{\det \mathbf g}$.
Similarly, the orthonormality of $\{\hat{\mathbf E}_A\}$ implies that
$\delta_{AB}= \llangle \hat{\mathbf E}_A,\hat{\mathbf E}_B \rrangle_{\mathbf G}
= \cL_A{}^C\,\cL_B{}^D\,\cG_{CD}$,
and hence $(\det\Lt)^2 \det \mathbf G=1$.
Since the material orthonormal frame is positively oriented, we have
$\det \Lt=1/\sqrt{\det \mathbf G}$.
Substituting these identities into \eqref{Det-Hat-F} yields
%---------------------------------
\begin{equation}
	\det\llbracket \hat{\cF}^a{}_A \rrbracket=\sqrt{\det \mathbf g}\,\det\llbracket {\cF} \rrbracket\,\frac{1}{\sqrt{\det \mathbf G}} =J\,.
\end{equation}
%---------------------------------
\end{proof}
%-----------------------------
%-----------------------------
\begin{remark}
We have shown that the familiar identity $J=\det\mathbf F$ is valid precisely when $\mathbf F$ is
represented by its physical components. Cartesian coordinates are the special case in which
the coordinate bases are already orthonormal, so that coordinate and physical components
coincide and the identity may be read literally. A proof of $J=\det\hat{\mathbf F}$ for
orthogonal curvilinear coordinates was given in \citep{PradhanYavari2024PhaseChange};
the above proposition establishes the stronger result that it holds for arbitrary
coordinates and arbitrary positively-oriented orthonormal frames.
\end{remark}
%-----------------------------
%---------------------------------
\paragraph{Derived strain measures.}
The referential measures $\C$ and $\mathbf{B}$ are orthonormalized by $\Lo$ and the spatial measures $\ic$ and $\fb$ by $\boldsymbol\ell$, giving the (dimensionless) physical components
%---------------------------------
\begin{equation}
\label{eq:strain-phys-elast}
\begin{aligned}
	&\hat{\cC}_{AB} = \cLo_A{}^C\,\cLo_B{}^D\,\cC_{CD}\,, &\quad
	&\hat{\mathrm B}^{AB} = (\cLo^{-1})_C{}^A\,(\cLo^{-1})_D{}^B\,\mathrm B^{CD}\,, \\
	&\hat{\mathrm c}_{ab} = \ell_a{}^c\,\ell_b{}^d\,\mathrm c_{cd}\,, &\quad
	&\hat{\mathrm b}^{ab} = (\ell^{-1})_c{}^a\,(\ell^{-1})_d{}^b\,\mathrm b^{cd}\,,
\end{aligned}
\end{equation}
%---------------------------------
which in orthogonal coordinates reduce to
%---------------------------------
\begin{equation}
\label{eq:strain-phys-elast-orth}
\begin{aligned}
	&\hat{\cC}_{AB} = \frac{\cC_{AB}}{\sqrt{\cGo_{AA}\,\cGo_{BB}}}\,, &\quad
	&\hat{\mathrm B}^{AB} = \sqrt{\cGo_{AA}\,\cGo_{BB}}\;\mathrm B^{AB}\,, \\
	&\hat{\mathrm c}_{ab} = \frac{\mathrm c_{ab}}{\sqrt{\cg_{aa}\,\cg_{bb}}}\,, &\quad
	&\hat{\mathrm b}^{ab} = \sqrt{\cg_{aa}\,\cg_{bb}}\;\mathrm b^{ab}\,,
\end{aligned}
\end{equation}
%---------------------------------
with no summation on repeated indices. Substituting~\eqref{eq:F-phys-elast} into the definitions $\C=\F^\star\g\F$, $\mathbf{B}=\F^{-1}\g^\sharp\F^{-\star}$, $\ic=\F^{-\star}\Go\F^{-1}$, and $\fb=\F\Go^\sharpo\F^\star$, the metrics are absorbed into the frames and one finds
%---------------------------------
\begin{equation}
\label{eq:strain-matrix-identities}
\begin{aligned}
	&\hat{\cC}_{AB}=\hat{\cF}^a{}_A\,\hat{\cF}^b{}_B\,\delta_{ab}\,, &\quad
	&\hat{\mathrm B}^{AB} = (\hat{\cF}^{-1})^A{}_a\,(\hat{\cF}^{-1})^B{}_b\,\delta^{ab}\,, \\
	&\hat{\mathrm c}_{ab} = (\hat{\cF}^{-1})^A{}_a\,(\hat{\cF}^{-1})^B{}_b\,\delta_{AB}\,, &\quad
	&\hat{\mathrm b}^{ab} =\hat{\cF}^a{}_A\,\hat{\cF}^b{}_B\,\delta^{AB}\,,
\end{aligned}
\end{equation}
%---------------------------------
which correspond to the classical matrix identities of Cartesian continuum mechanics, reproduced verbatim.
%---------------------------------
\paragraph{Stress measures.}
The Cauchy stress $\boldsymbol\sigma$~\eqref{eq:Cauchy-Anelast} is spatial and contravariant on both legs; the first Piola--Kirchhoff stress $\mathbf{P}$~\eqref{eq:FirstPK} is contravariant on one leg of each kind; and the second Piola--Kirchhoff stress $\mathbf{S}$~\eqref{eq:SecondPK} is referential and contravariant on both legs. Here $\Fa = \mathbf{id}_{T_X\mathcal{B}}$, so no ambiguity attends the reference metric and both referential legs are orthonormalized against $\Go$. Their physical components read
%---------------------------------
\begin{equation}
\label{eq:stress-phys-elast}
	\hat{\sigma}^{ab} = (\ell^{-1})_c{}^a\,(\ell^{-1})_d{}^b\,\sigma^{cd}\,, \quad
	\hat{\cP}^{aA} = (\ell^{-1})_b{}^a\,(\cLo^{-1})_B{}^A\,\cP^{bB}\,, \quad
	\hat{\mathrm S}^{AB} = (\cLo^{-1})_C{}^A\,(\cLo^{-1})_D{}^B\,\mathrm S^{CD}\,,
\end{equation}
%---------------------------------
reducing in orthogonal coordinates to
%---------------------------------
\begin{equation}
\label{eq:stress-phys-elast-orth}
	\hat{\sigma}^{ab} = \sqrt{\cg_{aa}\,\cg_{bb}}\;\sigma^{ab}\,, \quad
	\hat{\cP}^{aA} = \sqrt{\cg_{aa}\,\cGo_{AA}}\;\cP^{aA}\,, \quad
	\hat{\mathrm S}^{AB} = \sqrt{\cGo_{AA}\,\cGo_{BB}}\;\mathrm S^{AB}
	\quad (\text{no summation})\,.
\end{equation}
%---------------------------------
Unlike the coordinate components, whose physical dimensions vary from entry to entry in curvilinear coordinates, every entry of~\eqref{eq:stress-phys-elast-orth} carries the dimension of stress.
%---------------------------------
%---------------------------------
\subsection{Physical components in finite anelasticity}
\label{Subsection:Phys-Anelasticity}
We now specialize the physical components construction to finite anelasticity, taking $\mathcal{M}=\mathcal{B}$, $\mathcal{N}=\mathcal{S}$, and $\phi=\varphi$. Physical components require an orthonormal frame on each leg, and hence a metric on each tangent space to which a leg is attached. On the spatial side this poses no difficulty, the ambient space equipped with the canonical metric $\g$, i.e. $\m=\g$. The reference body, by contrast, carries two: the induced Euclidean metric $\Go=\iota^*\g$ and the material (stress-free) metric $\G=\Fa^*\Go$. It is the BKL split $\F=\Fe\Fa$ that forces the distinction to be drawn, interposing the material configuration $(\mathcal{B},\G)$ between $(\mathcal{B},\Go)$ and $(\mathcal{S},\g)$ and thereby distributing the two reference metrics among its factors: $\Fe$ carries $\Go$ on its reference leg, $\F$ carries $\G$, and $\Fa$---being the map between the two referential pictures---one of each; see Fig.~\ref{Fig:distortions}. Each referential leg is accordingly orthonormalized against the metric with respect to which the tensor is reckoned, a choice dictated by the kinematic and constitutive role of that tensor rather than by convenience. We start with general curvilinear coordinate charts $\{X^A\}$ on $\mathcal{B}$ and $\{x^a\}$ on $\mathcal{S}$, no orthogonality being assumed with respect to any of the three metrics, and we proceed to examine physical components of key quantities in finite anelasticity.
%---------------------------------
\paragraph{The deformation gradient.}
For the deformation gradient and each of its two distortions, we record the metric that its reference leg calls for and the resulting physical components. Let $\Lo$ and $\Lt$ denote the reference-leg orthonormalization maps~\eqref{eq:2pt-frames} associated with $\Go$ and $\G$, respectively, and let $\boldsymbol\ell$ denote the spatial-leg orthonormalization map associated with $\g$.
%-----------------------------
\begin{itemize}[topsep=3pt, itemsep=0pt, leftmargin=15pt]
%---------------------------------
\item \emph{Elastic distortion $\Fe:(T_X\mathcal{B},\Go)\to(T_x\mathcal{S},\g)$.} The elastic distortion is attached to the Euclidean reference configuration, consistent with the constitutive representation $\mathcal{W}=\mathring{\mathcal{W}}(X,\Theta,\Fe,\mathring{\boldsymbol{\Lambda}},\Go,\g)$, in which $\Fe$ is measured against $\Go$. Its physical components are taken with respect to $\Go$ on the reference leg and $\g$ on the spatial leg,
%---------------------------------
\begin{equation}
\label{eq:Fe-phys-general}
	\hat{\cFe}^a{}_A = (\ell^{-1})_b{}^a\,\cFe^b{}_B\,\cLo_A{}^B\,.
\end{equation}
%---------------------------------
%---------------------------------
\item \emph{Anelastic distortion $\Fa:(T_X\mathcal{B},\G)\to(T_X\mathcal{B},\Go)$.} The anelastic distortion connects the two referential pictures, with the metric $\G$ on its domain and the metric $\Go$ on its codomain by the construction $\G=\Fa^*\Go$. Its physical components are taken with respect to $\G$ on the domain and $\Go$ on the codomain,
%---------------------------------
\begin{equation}
\label{eq:Fa-phys-general}
	\hat{\cFa}^A{}_B = (\cLo^{-1})_C{}^A\,\cFa^C{}_D\,\cL_B{}^D\,,
\end{equation}
%---------------------------------
and the resulting matrix $\llbracket\hat{\cFa}^A{}_B\rrbracket$ is orthogonal.\footnote{This follows directly from the isometry condition $\G=\Fa^*\Go$: expanding $\llangle\Fa\mathbf{U},\Fa\mathbf{V}\rrangle_{\Go}=\llangle\mathbf{U},\mathbf{V}\rrangle_{\G}$ on the $\G$- and $\Go$-orthonormal bases yields $\llbracket\hat{\cFa}\rrbracket^{\top}\,\llbracket\hat{\cFa}\rrbracket=\llbracket\delta^A_B\rrbracket$.}
%---------------------------------
\item \emph{Total deformation gradient $\F:(T_X\mathcal{B},\G)\to(T_x\mathcal{S},\g)$.} The total deformation is attached to the material configuration, consistent with $\mathcal{W}=\mathcal{W}(X,\Theta,\F,\boldsymbol{\Lambda},\G,\g)$, in which it is measured against the natural stress-free reference (material) manifold $(\mathcal{B},\G)$. Its physical components are taken with respect to $\G$ on the reference leg and $\g$ on the spatial leg,
%---------------------------------
\begin{equation}
\label{eq:F-phys-general}
	\hat{\cF}^a{}_A = (\ell^{-1})_b{}^a\,\cF^b{}_B\,\cL_A{}^B\,.
\end{equation}
%---------------------------------
\end{itemize}
%---------------------------------
Consistency with the action of $\F$ on vectors follows directly: writing $\hat{\mathrm u}^a$ and $\hat{\mathrm U}^A$ for physical components of $\mathbf u=\F\mathbf U$ and $\mathbf U$, equations~\eqref{eq:Phys_Comp_gen} and~\eqref{eq:F-phys-general} give
%---------------------------------
\begin{equation}
	\hat{\mathrm u}^a
	=(\ell^{-1})_b{}^a\,\mathrm{u}^b
	=(\ell^{-1})_b{}^a\,\cF^b{}_B\,\mathrm U^B
	=(\ell^{-1})_b{}^a\,\cF^b{}_B\,\cL_A{}^B\,\hat{\mathrm U}^A
	=\hat{\cF}^a{}_A\,\hat{\mathrm U}^A\,.
\end{equation}
%---------------------------------
Moreover, from~\eqref{eq:2pt-comp-change} physical components of each distortion are defined only up to independent orthogonal transformations of its two frames, the spatial leg gauged by $O(T_x\mathcal{S},\g)$ and the reference leg by $O(T_X\mathcal{B},\Go)$ or $O(T_X\mathcal{B},\G)$ as appropriate.
%---------------------------------
\begin{remark}[Consistency with the split]
\label{remark:split-consistency}
When each distortion is represented with respect to the appropriate frames on its two legs, their physical-component representations satisfy the same composition rule as the distortions themselves,
%---------------------------------
\begin{equation}\label{eq:Fhat-composition}
 \hat{\F} = \hat{\Fe}\,\hat{\Fa}\,,
\end{equation}
%---------------------------------
where the intermediate $\Go$-orthonormal frame cancels between the two distortions.
Recalling that $\llbracket\hat{\cFa}^A{}_B\rrbracket$ is an orthogonal matrix ($\Fa:(T_X\mathcal{B},\G)\to(T_X\mathcal{B},\Go)$ being an isometry), its singular values are all unity, and the gauge freedom found in~\eqref{eq:2pt-comp-change} may recast it as $\hat{\cFa}^A{}_B=\delta^A_B$.\footnote{From~\eqref{eq:2pt-comp-change}, a change of the domain and codomain frames sends $\hat{\cFa}^A{}_B\mapsto(\mathrm R^{-1})_C{}^A\,\hat{\cFa}^C{}_D\,\mathrm Q_B{}^D$, with $\mathbf{Q}\in O(T_X\mathcal{B},\G)$ and $\mathbf{R}\in O(T_X\mathcal{B},\Go)$ represented by orthogonal matrices in the respective orthonormal frames. As $\llbracket\hat{\cFa}^A{}_B\rrbracket$ is itself orthogonal, the choices $\llbracket\mathrm Q\rrbracket=\llbracket\hat{\cFa}\rrbracket^{-1}$ with $\mathbf{R}=\mathbf{id}_{T_X\mathcal{B}}$, or $\llbracket\mathrm R\rrbracket=\llbracket\hat{\cFa}\rrbracket$ with $\mathbf{Q}=\mathbf{id}_{T_X\mathcal{B}}$, are both admissible and yield $\llbracket\hat{\cFa}\rrbracket=\mathbf{id}_{T_X\mathcal{B}}$.} The invariant content of $\hat{\Fa}$ is therefore empty, and by~\eqref{eq:Fhat-composition} the representations $\hat{\F}$ and $\hat{\Fe}$ share the same singular values: in physical components, the anelastic distortion representation acts as a simple change of orthonormal frame between the material and the Euclidean reference configurations. This must not be read as saying that the anelastic distortion is physically meaningless; its essence is not lost but merely relocated: $\Fa$ determines the material metric $\G=\Fa^*\Go$, against which the reference leg of $\F$ is orthonormalized, and it is thus carried into $\hat{\F}$ by the orthonormalization map $\Lt$.
\end{remark}
%---------------------------------
%---------------------------------
\paragraph{Derived strain measures.}
The derived strain measures divide according to the configuration their legs are attached to. The total referential measures $\C$ and $\mathbf{B}$ are attached to the material configuration $(\mathcal{B},\G)$ and are hence orthonormalized by $\Lt$, their elastic counterparts $\Ce$ and $\Be$ are defined with respect to the Euclidean reference configuration $(\mathcal{B},\Go)$ and hence are orthonormalized via $\Lo$, while the spatial measures $\ic$ and $\fb$ carry $\g$ on both legs and are indifferent to the distinction. Their physical components read
%---------------------------------
\begin{equation}
\label{eq:strain-phys-anel}
\begin{aligned}
	&\hat{\cC}_{AB} = \cL_A{}^C\,\cL_B{}^D\,\cC_{CD}\,, &\quad
	&\hat{\mathrm B}^{AB} = (\cL^{-1})_C{}^A\,(\cL^{-1})_D{}^B\,\mathrm B^{CD}\,, \\
	&\hat{\cCe}_{AB} = \cLo_A{}^C\,\cLo_B{}^D\,\cCe_{CD}\,, &\quad
	&\hat{\cBe}^{AB} = (\cLo^{-1})_C{}^A\,(\cLo^{-1})_D{}^B\,\cBe^{CD}\,, \\
	&\hat{\mathrm c}_{ab} = \ell_a{}^c\,\ell_b{}^d\,\mathrm c_{cd}\,, &\quad
	&\hat{\mathrm b}^{ab} = (\ell^{-1})_c{}^a\,(\ell^{-1})_d{}^b\,\mathrm b^{cd}\,.
\end{aligned}
\end{equation}
%---------------------------------
Substituting~\eqref{eq:F-phys-general} into the definitions $\C=\F^\star\g\F$, $\mathbf{B}=\F^{-1}\g^\sharp\F^{-\star}$, $\ic=\F^{-\star}\G\F^{-1}$, and $\fb=\F\G^\sharp\F^\star$, the metrics are absorbed into the frames and one finds
%---------------------------------
\begin{equation}
\label{eq:strain-matrix-identities-anel}
\begin{aligned}
	&\hat{\cC}_{AB}=\hat{\cF}^a{}_A\,\hat{\cF}^b{}_B\,\delta_{ab}\,, &\quad
	&\hat{\mathrm B}^{AB} = (\hat{\cF}^{-1})^A{}_a\,(\hat{\cF}^{-1})^B{}_b\,\delta^{ab}\,, \\
	&\hat{\cCe}_{AB}=\hat{\cFe}^a{}_A\,\hat{\cFe}^b{}_B\,\delta_{ab}\,, &\quad
	&\hat{\cBe}^{AB} = (\hat{\cFe}^{-1})^A{}_a\,(\hat{\cFe}^{-1})^B{}_b\,\delta^{ab}\,, \\
	&\hat{\mathrm c}_{ab} = (\hat{\cF}^{-1})^A{}_a\,(\hat{\cF}^{-1})^B{}_b\,\delta_{AB}\,, &\quad
	&\hat{\mathrm b}^{ab} =\hat{\cF}^a{}_A\,\hat{\cF}^b{}_B\,\delta^{AB}\,.
\end{aligned}
\end{equation}
%---------------------------------
%---------------------------------
\begin{remark}[Total and elastic measures in physical components]
\label{remark:total-vs-elastic-phys}
The spatial identities $\fb=\be$ and $\ic=\ce$ are preserved upon passage to physical components. Indeed, by~\eqref{eq:Fhat-composition} and the orthogonality of $\hat{\Fa}$,
$\hat{\F}\hat{\F}^\top=\hat{\Fe}\hat{\Fa}\hat{\Fa}^\top\hat{\Fe}^\top=\hat{\Fe}\hat{\Fe}^\top$.
The total and elastic referential measures remain distinct, but their pullback relations $\C=\Fa^*\Ce$ and $\mathbf{B}=\Fa^*\Be$ take the following form in physical components:
%---------------------------------
\begin{equation}
\label{eq:C-conjugation}
	\hat{\C} = \hat{\Fa}^\top\,\hat{\Ce}\,\hat{\Fa}\,, \quad
	\hat{\mathbf{B}} = \hat{\Fa}^\top\,\hat{\Be}\,\hat{\Fa}\,.
\end{equation}
%---------------------------------
Thus, the total and elastic referential measures are related by orthogonal similarity transformations and therefore have the same invariants. They may be made identical by the gauge choice of Remark~\ref{remark:split-consistency}, for which $\hat{\cFa}^A{}_B=\delta^A_B$.
\end{remark}
%---------------------------------
%---------------------------------
\paragraph{Stress measures.}
The Cauchy stress $\boldsymbol\sigma$ is spatial and contravariant on both legs; the first Piola--Kirchhoff stress $\mathbf{P}$ is contravariant on one leg of each kind; and the second Piola--Kirchhoff stress $\mathbf{S}$ is referential and contravariant on both legs. Both Piola--Kirchhoff stresses are measured per unit stress-free volume $dV$, the traction $\mathbf{T}=\mathbf{P}\mathbf{N}^{\flat}$ being reckoned per unit stress-free area with $\mathbf{N}$ the $\G$-unit normal; their referential legs are therefore attached to the material configuration $(\mathcal{B},\G)$ and orthonormalized by $\Lt$---and not by $\Lo$, as in the elastic case, where $\Fa=\mathbf{id}_{T_X\mathcal{B}}$ collapses the two maps into one. Their physical components read
%---------------------------------
\begin{equation}
\label{eq:stress-phys-general}
	\hat{\sigma}^{ab} = (\ell^{-1})_c{}^a\,(\ell^{-1})_d{}^b\,\sigma^{cd}\,,\quad
	\hat{\cP}^{aA} = (\ell^{-1})_b{}^a\,(\cL^{-1})_B{}^A\,\cP^{bB}\,,\quad
	\hat{\mathrm{S}}^{AB} = (\cL^{-1})_C{}^A\,(\cL^{-1})_D{}^B\,\mathrm{S}^{CD}\,.
\end{equation}
%---------------------------------
Unlike the coordinate components, whose physical dimensions vary from entry to entry in curvilinear coordinates, every entry of~\eqref{eq:stress-phys-general} carries the dimension of stress. The Piola transforms relating the three measures are likewise unencumbered by metric factors once all legs are referred to orthonormal frames, the referential frames cancelling between $\hat{\mathbf{P}}$, $\hat{\mathbf{S}}$, and $\hat{\F}$,
%---------------------------------
\begin{equation}
\label{eq:stress-cartesian-identities}
	\hat{\mathbf{P}} = \hat{\F}\,\hat{\mathbf{S}}\,,\quad
	\hat{\boldsymbol\sigma} = J^{-1}\,\hat{\mathbf{P}}\,\hat{\F}^{\top} = J^{-1}\,\hat{\F}\,\hat{\mathbf{S}}\,\hat{\F}^{\top}\,.
\end{equation}
%---------------------------------
%--------------------------------------------------------------
\paragraph{Physical components in orthogonal coordinates.}
We now turn to orthogonal coordinates---the orthonormalization maps become diagonal and physical components follow from the coordinate ones by~\eqref{eq:Ortho_Coord_M}. Orthogonality, however, is defined in relation to a metric, and the reduction is available on a given leg only when the coordinates are orthogonal with respect to the very metric against which that leg is orthonormalized. On the spatial side this qualification is unambiguous: the ambient space carries the single metric $\g$, and $\g$-orthogonality of $\{x^a\}$ is all that can be meant. On the referential side it is however ambiguous, as $\{X^A\}$ may be orthogonal with respect to one of the two referential metrics and not necessarily the other: since $\G=\Fa^*\Go$, the components $\cG_{AB}=\cFa^I{}_A\,\cGo_{IJ}\,\cFa^J{}_B$ need not be diagonal when $\cGo_{AB}$ is, nor conversely. Orthogonal referential coordinates must therefore name their metric, and no single chart $\{X^A\}$ can, in general, serve both.
Three conditions are thus in play---$\g$-orthogonality of $\{x^a\}$, and $\Go$- and $\G$-orthogonality of $\{X^A\}$---and none implies another.\footnote{The spatial and referential conditions are independent, $\{X^A\}$ and $\{x^a\}$ being coordinate charts on distinct manifolds. They collapse into one another only in special circumstances: if $\{X^A\}$ is the restriction under the inclusion map $\iota:\mathcal{B}\hookrightarrow\mathcal{S}$ of the ambient chart $\{x^a\}$, then $\cGo_{AB}$ and $\cg_{ab}$ are the same functions of that chart and $\g$-orthogonality of the latter is $\Go$-orthogonality of the former. This does not hold for coordinates convected by the motion, for which $\g$-orthogonality of $\{x^a\}$ amounts to diagonality of $\C=\varphi^*\g$ and is a condition on the deformation rather than on the chart alone. The two referential conditions are independent of each other by the non-diagonality argument just given, $\Fa$ being an arbitrary distortion.} Let us consider a case of practical interest: $\{x^a\}$ is $\g$-orthogonal and $\{X^A\}$ is $\Go$-orthogonal---the typical case for a body whose embedded Euclidean geometry is adapted to the coordinates---but $\{X^A\}$ is not necessarily $\G$-orthogonal. Every leg orthonormalized against $\g$ or $\Go$ contributes a diagonal factor, as given by~\eqref{eq:Ortho_Coord_M}, while every leg orthonormalized against $\G$ retains the generally non-diagonal $\Lt$, as given by~\eqref{eq:beta-explicit}.
%-----------------------------
\begin{itemize}[topsep=\smallskipamount, itemsep=\smallskipamount, parsep=0pt, leftmargin=0pt, itemindent=\parindent, labelwidth=\parindent, labelsep=0pt, align=left]
%---------------------------------
\item The deformation gradient and its split reduce accordingly,
%---------------------------------
\begin{equation}\label{eq:phys-orth-anel}
	\hat{\cFe}^a{}_A = \sqrt{\frac{\cg_{aa}}{\cGo_{AA}}}\;\cFe^a{}_A\,, \quad
	\hat{\cFa}^A{}_B = \sqrt{\cGo_{AA}}\;\cFa^A{}_D\,\cL_B{}^D\,, \quad
	\hat{\cF}^a{}_A = \sqrt{\cg_{aa}}\;\cF^a{}_B\,\cL_A{}^B\,,
\end{equation}
%---------------------------------
with no summation on $a$ and $A$. Only the elastic distortion in \eqref{eq:phys-orth-anel} trivially reduces on both legs following the classical formula of the orthogonal-coordinate literature, now referred to the Euclidean reference configuration. The anelastic distortion reduces on its codomain alone and the total deformation gradient on its spatial leg alone, their common $\G$-orthonormalized referential leg being served by neither hypothesis.
%---------------------------------
\item The derived strain measures divide the same way. Those referred to $\g$ and $\Go$ reduce fully,
%---------------------------------
\begin{subequations}\label{eq:strain-phys-orth-anel}
\begin{alignat}{2}
	\label{eq:spatial-strain-phys-orth-anel}
	&\hat{\mathrm c}_{ab} = \frac{\mathrm c_{ab}}{\sqrt{\cg_{aa}\,\cg_{bb}}}\,, \quad&
	&\hat{\mathrm b}^{ab} = \sqrt{\cg_{aa}\,\cg_{bb}}\;\mathrm b^{ab}\,, \\
	\label{eq:elastic-strain-phys-orth-anel}
	&\hat{\cCe}_{AB} = \frac{\cCe_{AB}}{\sqrt{\cGo_{AA}\,\cGo_{BB}}}\,, \quad&
	&\hat{\cBe}^{AB} = \sqrt{\cGo_{AA}\,\cGo_{BB}}\;\cBe^{AB}\,.
\end{alignat}
\end{subequations}
%---------------------------------
with no summation on repeated indices, while their total referential counterparts, orthonormalized against $\G$, do not reduce at all and retain the general form~\eqref{eq:strain-phys-anel},
%---------------------------------
\begin{equation}\label{eq:total-strain-phys-orth-anel}
	\hat{\cC}_{AB} = \cL_A{}^C\,\cL_B{}^D\,\cC_{CD}\,, \quad
	\hat{\mathrm B}^{AB} = (\cL^{-1})_C{}^A\,(\cL^{-1})_D{}^B\,\mathrm B^{CD}\,.
\end{equation}
%---------------------------------
Under the referential chart one would naturally adopt, then, it is the elastic strain measures that take the classical orthogonal-coordinate form and the total ones that do not---the reverse of nonlinear elasticity, where the two families coincide and both reduce.
%---------------------------------
\item The stress measures divide by leg rather than by family, the Cauchy stress being spatial, the second Piola--Kirchhoff stress referential, and the first Piola--Kirchhoff stress two-point,
%---------------------------------
\begin{equation}\label{eq:stress-phys-orth-anel}
	\hat{\sigma}^{ab} = \sqrt{\cg_{aa}}\,\sigma^{ab}\,\sqrt{\cg_{bb}}\,, \quad
	\hat{\cP}^{aA} = \sqrt{\cg_{aa}}\,\cP^{aB}\,(\cL^{-1})_B{}^A\,, \quad
	\hat{\mathrm S}^{AB} = (\cL^{-1})_C{}^A\,\mathrm S^{CD}\,(\cL^{-1})_D{}^B\,,
\end{equation}
%---------------------------------
with no summation on $a$ and $b$. Only $\boldsymbol\sigma$ reduces fully; $\mathbf{P}$ reduces on its spatial leg alone and $\mathbf{S}$ on neither, both being measured per unit stress-free volume and hence attached to $(\mathcal{B},\G)$.
\end{itemize}
%---------------------------------
%---------------------------------
\begin{remark}[Coordinates orthogonal with respect to both reference metrics]
\label{remark:Fa-diagonal}
Suppose the referential coordinates are $\G$-orthogonal in addition, i.e.
$\cG_{AB}=\llangle\Fa\partial_A,\Fa\partial_B\rrangle_{\Go}=0$, for any $A\neq B$,
so that $\Fa$ carries the coordinate directions to mutually $\Go$-orthogonal ones. Then $\Lt$ is diagonal like $\Lo$ and $\boldsymbol\ell$, and every leg of every quantity reduces. In particular,~\eqref{eq:phys-orth-anel} yields
%---------------------------------
\begin{equation}\label{eq:Fa-phys-both-orth}
	\hat{\cFa}^A{}_B=\sqrt{\frac{\cGo_{AA}}{\cG_{BB}}}\;\cFa^A{}_B
	\quad (\text{no summation})\,,
\end{equation}
%---------------------------------
which is orthogonal, as it must be: it maps the $\G$-orthonormal coordinate frame onto the $\Go$-orthonormal one. The two frames coincide exactly when $\Fa$ is diagonal in $\{X^A\}$, and then $\hat{\Fa}$ is the identity. Indeed, $\cG_{AA}=(\cFa^A{}_A)^2\,\cGo_{AA}$ in that case (no summation on $A$), so that
%---------------------------------
\begin{equation}\label{eq:Fa-phys-identity}
	\hat{\cFa}^A{}_B = \delta^A_B\,,
\end{equation}
%---------------------------------
and $\hat{\F}=\hat{\Fe}$ by~\eqref{eq:Fhat-composition}. The raw components hide this: $\cFa^A{}_B$ rescales each coordinate direction by a nontrivial factor, which the adapted orthonormal frames absorb.
Note that~\eqref{eq:Fa-phys-identity} differs in kind from the trivialization of Remark~\ref{remark:split-consistency}. There the identity is reached by choosing a gauge, and is available for any $\Fa$. Here no gauge remains to be chosen, each metric singling out its normalized coordinate basis; the chart fixes the frames, and $\hat{\Fa}$ is the identity only if $\Fa$ is diagonal. Diagonality is thus the condition for the coordinate-adapted frames to be trivializing ones.
\end{remark}
%---------------------------------
%-----------------------------
\begin{remark}[Uniqueness of the material metric]
\label{Remark:Material-metric}
The BKL split is not determined by the material response: if $\mathring{\mathbf{K}}\in\mathring{\mathcal{G}}_X$, the pair $(\mathring{\mathbf{K}}^{-1}\Fa,\,\Fe\mathring{\mathbf{K}})$ leaves both $\F$ and the energy unchanged by~\eqref{Elasticity-Sym-Group-Reform}, the relaxed configuration being fixed only up to the material symmetry group. The material metric does not inherit this freedom. Since $\mathring{\mathcal{G}}_X\subseteq\mathrm{Orth}(\Go)$,
$(\mathring{\mathbf{K}}^{-1}\Fa)^*\Go=\Fa^*(\mathring{\mathbf{K}}^{-1})^*\Go=\Fa^*\Go=\G$,
and $\G$ is thus invariant: the anelastic distortion is determined only up to material symmetry, while the material metric it induces is uniquely determined. The structure of anelasticity therefore introduces no ambiguity of its own into the reference geometry, and the physical component representations for $\F$ are correspondingly unaffected.
This conclusion relies on the material symmetry group being a subgroup of $\mathrm{Orth}(\Go)$. If the symmetry group contains non-isometric transformations, as in the case of fluids, symmetry-related BKL splits may induce different material metrics.
\end{remark}
%-----------------------------
%---------------------------------
\subsection{Physical components beyond anelasticity}
\label{Sec:Beyond-Anelasticity}
The framework of \S\ref{Section:Physical-Components} is general: it is a construction on Riemannian manifolds, and on two-point tensors over a map between two such manifolds, that nowhere refers to the kinematics of elasticity or anelasticity. Those settings, treated in \S\ref{Subsection:Phys-Elasticity} and \S\ref{Subsection:Phys-Anelasticity}, merely distinguish the configurations and their metrics. The framework may hence extend beyond anelasticity to general inelastic processes, and in what follows we discuss viscoelasticity and visco-anelasticity, referring the reader to \citep{SaYa2024viscoelasticity, SaYa2026visco} for the underlying geometric framework.
%---------------------------------
\paragraph{Viscoelasticity.} Viscoelasticity is the eigenstrain-free inelastic response in which the deformation is time-dependent. The deformation gradient may be decomposed as $\F=\Fe\Fv$, into an instantaneous distortion $\Fe$ and a time-dependent viscous distortion $\Fv$, both generally incompatible.
The local configuration reached by the instantaneous elastic unloading $\Fe^{-1}$ is stressed, in general, and the material manifold is the Euclidean manifold $(\mathcal{B},\Go)$ throughout \citep{SaYa2024viscoelasticity}.
The additive decomposition of the free energy into equilibrium and non-equilibrium parts forces the tensorial characters $\Fv(X):T_X\mathcal{B}\to T_X\mathcal{B}$ and $\Fe(X):T_X\mathcal{B}\to T_x\mathcal{S}$, and reads
%---------------------------------
\begin{equation} \label{eq:visco-energy}
	\mathring{\mathcal W}=\mathring{\mathcal W}_{ \mathrm{eq}}(X,\Theta,\F,\mathring{\boldsymbol\Lambda},\Go,\g)
	+\mathring{\mathcal W}_{ \mathrm{neq}}(X,\Theta,\Fe,\mathring{\boldsymbol\Lambda},\Go,\g)\,,
\end{equation}
%---------------------------------
where $\mathring{\boldsymbol{\Lambda}}$ denotes the structural tensors of the material symmetry group.
Applying the material-covariance argument of \eqref{Elasticity-Sym-Group-Reform}--\eqref{Eq:FreeAnelast-Reform} to the non-equilibrium branch with $\boldsymbol{\mathsf{T}}=\Fv$ yields the equivalent representation
%---------------------------------
\begin{equation} \label{eq:visco-transport}
	\mathring{\mathcal W}_{ \mathrm{neq}}(X,\Theta,\Fe,\mathring{\boldsymbol\Lambda},\Go,\g)
	=\mathcal W_{ \mathrm{neq}}(X,\Theta,\F,\Lambdav,\Gv,\g)\,,
	\quad
	\Gv=\Fv^*\Go\,,
	\quad
	\Lambdav=\Fv^*\mathring{\boldsymbol\Lambda}\,.
\end{equation}
%---------------------------------
Both branches thus depend on the total deformation gradient, yet they measure it against different reference metrics: $\Go$ in the equilibrium branch and $\Gv$ in the non-equilibrium one.
The kinematics designates neither metric; it is the constitutive representation that assigns a metric to the reference leg of $\F$, and the same motion presents a different geometric picture in each branch.

%---------------------------------
\emph{Equilibrium branch.}
The equilibrium picture is exactly that of elasticity: only the Euclidean metric is of relevance, the viscous distortion plays no part, and the geometric picture reduces to
%---------------------------------
\begin{equation} \label{eq:visco-chain-eq}
	(T_X\mathcal{B},\Go)\ \xrightarrow{\ \F\ }\ (T_x\mathcal{S},\g)\,.
\end{equation}
%---------------------------------
Hence, \S\ref{Subsection:Phys-Elasticity} applies without alteration.
Recalling that $\boldsymbol\ell$ and $\Lo$ orthonormalize the spatial and referential coordinate frames with respect to $\g$ and $\Go$, the physical components of $\F$ read%\footnote{The distortions $\Fe$ and $\Fv$ \do not have, independently from their composition $\F$, any physical significance on the equilibrium branch.}
%---------------------------------
\begin{equation} \label{eq:visco-phys-eq}
	\hat{\mathrm{F}}^a{}_A=(\ell^{-1})_b{}^a\,\mathrm{F}^b{}_B\,\cLo_A{}^B\,.
\end{equation}
%---------------------------------

%---------------------------------
\emph{Non-equilibrium branch.}
The non-equilibrium picture resembles that of anelasticity without being it. Formally, the parallel is complete: $\Gv=\Fv^*\Go$, exactly as $\G=\Fa^*\Go$---cf.~Fig.~\ref{Fig:distortions}---so that $\Fv$ is an isometry and the picture is the chain\footnote{What distinguishes the two settings is the standing of the convected metric, not its construction: in anelasticity, $\G$ measures natural lengths and $(\mathcal{B},\G)$ is the material manifold of the body, whereas here the local intermediate configuration $(\mathcal{B},\Gv)$ is stressed and does not yield natural lengths.}
%---------------------------------
\begin{equation} \label{eq:visco-chain-neq}
	(T_X\mathcal{B},\Gv)\ \xrightarrow{\ \Fv\ }\ (T_X\mathcal{B},\Go)
	\ \xrightarrow{\ \Fe\ }\ (T_x\mathcal{S},\g)\,.
\end{equation}
%---------------------------------
To avoid ambiguity, we denote with ``$\tilde{(.)}$'' all physical components taken in the non-equilibrium picture, reserving ``$\hat{(.)}$'' for those of the equilibrium one. Accordingly, with $\Lv$ orthonormalizing the referential coordinate frame with respect to $\Gv$, as $\Lo$ and $\Lt$ do with respect to $\Go$ and $\G$,
%---------------------------------
\begin{equation} \label{eq:visco-phys-neq}
	\tilde{\mathrm{F}}^a{}_A
		=(\ell^{-1})_b{}^a\,\mathrm{F}^b{}_B\,\cLv_A{}^B\,,
	\quad
	\tilde{\cFe}{}^a{}_A=(\ell^{-1})_b{}^a\,\cFe^b{}_B\,\cLo_A{}^B\,,
	\quad
	\tilde{\cFv}{}^A{}_B=(\cLo^{-1})_C{}^A\,\cFv^C{}_D\,\cLv_B{}^D\,,
	\quad
	\tilde{\mathrm{F}}^a{}_A=\tilde{\cFe}{}^a{}_B\,\tilde{\cFv}{}^B{}_A\,.
\end{equation}
%---------------------------------

The two representations of $\F$ differ only in the referential frame to which its reference leg is referred, and are hence related by
%---------------------------------
\begin{equation} \label{eq:visco-N}
	\tilde{\mathrm{F}}^a{}_A=\hat{\mathrm{F}}^a{}_B\,\mathsf{N}_A{}^B\,,
	\quad\text{with}\quad
	\mathsf{N}_A{}^B=(\cLo^{-1})_C{}^B\,\cLv_A{}^C\,,
\end{equation}
%---------------------------------
where $\mathsf{N}$ carries the $\Go$-orthonormal referential frame to the $\Gv$-orthonormal one.
Both $\hat{\F}$ and $\tilde{\F}$ are dimensionally consistent, dual-compatible, and unit-normalized, hence physical in the sense of \S\ref{Section:Physical-Components}, and they report on the long-term and the instantaneous response, respectively. 
The choice between these interpretations is a constitutive one and depends on the branch of the material response relative to which the components are defined.
%---------------------------------
\paragraph{Visco-anelasticity.} In visco-anelasticity, eigenstrain and time-dependent relaxation are both present, and the deformation gradient admits the decomposition
%---------------------------------
\begin{equation} \label{eq:va-split}
	\F=\Fe\Fv\Fa=\Fve\Fa\,,
	\quad
	\Fve=\Fe\Fv\,.
\end{equation}
%---------------------------------
Two local intermediate configurations emerge, one stress-free---reached by $\Fa$---and one stressed---reached by $\Fv\Fa$; the material manifold is $(\mathcal{B},\G)$, as in anelasticity \citep{SaYa2026visco}. The free energy again splits additively, with the equilibrium branch depending on $\Fve$ and the non-equilibrium one on $\Fe$,
%---------------------------------
\begin{equation} \label{eq:va-energy}
	\mathring{\mathcal W}
	=\mathring{\mathcal W}_{ \mathrm{eq}}(X,\Theta,\Fve,\mathring{\boldsymbol\Lambda},\Go,\g)
	+\mathring{\mathcal W}_{ \mathrm{neq}}(X,\Theta,\Fe,\mathring{\boldsymbol\Lambda},\Go,\g)\,.
\end{equation}
%---------------------------------
Applying material covariance to the two branches with $\boldsymbol{\mathsf{T}}=\Fa$ and $\boldsymbol{\mathsf{T}}=\Fv\Fa$, respectively, yields
%---------------------------------
\begin{equation} \label{eq:va-transport}
	\mathring{\mathcal W}=\mathcal W_{ \mathrm{eq}}(X,\Theta,\F,\boldsymbol\Lambda,\G,\g)
	+\mathcal W_{ \mathrm{neq}}(X,\Theta,\F,\Lambdai,\Gi,\g)\,,
	\quad
	\Gi=(\Fv\Fa)^*\Go=\Fa^*\Gv\,,
	\quad
	\Lambdai=(\Fv\Fa)^*\mathring{\boldsymbol\Lambda}\,.
\end{equation}
%---------------------------------
Similarly to viscoelasticity, two geometric pictures emerge: the equilibrium branch measures $\F$ against the material metric $\G$, the non-equilibrium branch against the instantaneous metric $\Gi$.

%---------------------------------
\emph{Equilibrium branch.} The equilibrium picture is exactly that of anelasticity, with $\Fve$ replacing $\Fe$, the geometric picture being the chain
%---------------------------------
\begin{equation} \label{eq:va-chain-eq}
	(T_X\mathcal{B},\G)\ \xrightarrow{\ \Fa\ }\ (T_X\mathcal{B},\Go)
	\ \xrightarrow{\ \Fve\ }\ (T_x\mathcal{S},\g)\,.
\end{equation}
%---------------------------------
Thus \S\ref{Subsection:Phys-Anelasticity} applies without alteration, and the expressions for physical components follow as%\footnote{The distortions $\Fe$ and $\Fv$ \do not have, independently from their composition $\Fve$, any physical significance on the equilibrium branch.}
%---------------------------------
\begin{equation} \label{eq:va-phys-eq}
	\hat{\mathrm{F}}^a{}_A=(\ell^{-1})_b{}^a\,\mathrm{F}^b{}_B\,\cL_A{}^B\,,
	\quad
	\hat{\cFve}{}^a{}_A =(\ell^{-1})_b{}^a\,\cFve^b{}_B\,\cLo_A{}^B\,,
	\quad
	\hat{\cFa}{}^A{}_B=(\cLo^{-1})_C{}^A\,\cFa^C{}_D\,\cL_B{}^D\,,
	\quad
	\hat{\mathrm{F}}^a{}_A=\hat{\cFve}{}^a{}_B\,\hat{\cFa}{}^B{}_A\,.
\end{equation}
%---------------------------------

%---------------------------------
\emph{Non-equilibrium branch.} The non-equilibrium branch features a similar construction with one further distortion in the chain
%---------------------------------
\begin{equation} \label{eq:va-chain-neq}
	(T_X\mathcal{B},\Gi)\ \xrightarrow{\ \Fa\ }\ (T_X\mathcal{B},\Gv)
	\ \xrightarrow{\ \Fv\ }\ (T_X\mathcal{B},\Go)
	\ \xrightarrow{\ \Fe\ }\ (T_x\mathcal{S},\g)\,,
\end{equation}
%---------------------------------
since $\Fa^*\Gv=\Gi$ and $\Fv^*\Go=\Gv$.
With $\Li$ orthonormalizing the referential coordinate frame with respect to $\Gi$, the corresponding physical component representations are given by
%---------------------------------
\begin{equation} \label{eq:va-phys-neq}
\begin{gathered}
	\tilde{\mathrm{F}}^a{}_A=(\ell^{-1})_b{}^a\,\mathrm{F}^b{}_B\,\cLi_A{}^B\,,
	\quad
	\tilde{\cFe}{}^a{}_A=(\ell^{-1})_b{}^a\,\cFe^b{}_B\,\cLo_A{}^B\,,
	\quad
	\tilde{\cFv}{}^A{}_B=(\cLo^{-1})_C{}^A\,\cFv^C{}_D\,\cLv_B{}^D\,,
	\\
	\tilde{\cFa}{}^A{}_B=(\cLv^{-1})_C{}^A\,\cFa^C{}_D\,\cLi_B{}^D\,,
	\quad
	\tilde{\mathrm{F}}^a{}_A =\tilde{\cFe}{}^a{}_B\,\tilde{\cFv}{}^B{}_C\,\tilde{\cFa}{}^C{}_A\,.
\end{gathered}
\end{equation}
%---------------------------------

The two representations are related by
%---------------------------------
\begin{equation} \label{eq:va-N}
	\tilde{\mathrm{F}}^a{}_A=\hat{\mathrm{F}}^a{}_B\,\mathsf{N}_A{}^B\,,
	\quad\text{with}\quad
	\mathsf{N}_A{}^B=(\cL^{-1})_C{}^B\,\cLi_A{}^C\,.
\end{equation}
%---------------------------------
%---------------------------------
\section{Example: A circular cylindrical bar with distributed eigentwists}
\label{Section:Example}
%-----------------------------
\paragraph{Geometry and kinematics.}
Consider a solid circular cylindrical bar of radius $R_o$, and let the
ambient and reference spaces be endowed with cylindrical coordinates $(r,\theta,z)$ and
$(R,\Theta,Z)$, respectively. The spatial Euclidean metric $\g$ and its inherited
referential copy $\Go=\iota^*\g$ have the diagonal matrix representations $\g =\operatorname{diag}\left\{1,\,r^2,\,1\right\}$, and $\Go =\operatorname{diag}\left\{1,\,R^2,\,1\right\}$ such that the axis of the bar in its reference state is aligned with $Z$.
Let the bar be endowed with an axisymmetric distribution of torsional eigenstrains of
eigentwist density $\psi(R)$---note that $[\psi]=L^{-1}$. Cartan's moving frames yield the associated anelastic
distortion $\Fa$ \citep{YavariGoriely2015Twist-Fit}, whence the material metric
$\G=\Fa^*\Go$; the two have the representations
%-----------------------------
\begin{equation}
	\Fa =\begin{bmatrix}
	1&0&0\\
	0&1&\psi(R)\\
	0&0&1
	\end{bmatrix}\,,
	\quad
	\G =
	\begin{bmatrix}
		1 & 0 & 0 \\
		0 & R^2 & R^2 \psi(R) \\
		0 & R^2 \psi(R) & 1 + R^2 \psi^2(R)
	\end{bmatrix}\,.
\end{equation}
%----------------------------------
In general, when $\psi'(R)\neq0$, the material metric $\G$ is not induced by a deformation of a Euclidean reference configuration and hence represents an incompatible distribution of torsional eigenstrains.

The cylindrical coordinates $(r,\theta,z)$ and $(R,\Theta,Z)$ are orthogonal with respect
to $\g$ and $\Go$, respectively, so that orthonormalization on these legs is effected, as in \eqref{eq:Ortho_Coord_M}, by the following diagonal maps
%-----------------------------
\begin{equation}
	\left\llbracket\cLo_A{}^B\right\rrbracket
	=\operatorname{diag}\left\{1,\frac{1}{R},\,1\right\}\,,
	\quad
	\left\llbracket\ell_a{}^b\right\rrbracket
	=\operatorname{diag}\left\{1,\,\frac{1}{r},\,1\right\}\,.
\end{equation}
%-----------------------------
With respect to $\G$, however, $(R,\Theta,Z)$ is not orthogonal, and a $\G$-orthonormal
moving coframe field and its dual are given by
%-----------------------------
\begin{subequations}
\label{Cartan-Moving-Twist}
\begin{align}
	\hat{\boldsymbol\Omega}^1 &= dR\,,
	&\hat{\boldsymbol\Omega}^2 &= R\,d\Theta+R\psi(R)\,dZ\,,
	&\hat{\boldsymbol\Omega}^3 &= dZ\,,
	\label{OrthonormalFrameTwist}\\
	\hat{\mathbf{E}}_1 &= \frac{\partial}{\partial R}\,,
	&\hat{\mathbf{E}}_2 &= \frac{1}{R}\frac{\partial}{\partial\Theta}\,,
	&\hat{\mathbf{E}}_3 &= \frac{\partial}{\partial Z}-\psi(R)\frac{\partial}{\partial\Theta}\,.
	\label{Moving-Frame-Twist}
\end{align}
\end{subequations}
%-----------------------------
Recast as in \eqref{eq:2pt-frames}--\eqref{eq:2pt-coframes}, i.e. $\hat{\boldsymbol\Omega}^A=(\cL^{-1})_B{}^A\,dX^B$
and $\hat{\mathbf E}_A=\cL_A{}^B\,\partial_B$, these read
%-----------------------------
\begin{equation}
	\begin{bmatrix}
		\hat{\boldsymbol\Omega}^1 &
		\hat{\boldsymbol\Omega}^2 &
		\hat{\boldsymbol\Omega}^3
	\end{bmatrix}
	=
	\begin{bmatrix}
		dR & d\Theta & dZ
	\end{bmatrix}
	\begin{bmatrix}
		1 & 0 & 0\\
		0 & R & 0\\
		0 & R\,\psi(R) & 1
	\end{bmatrix}\,,
	\quad
	\begin{bmatrix}
		\hat{\mathbf{E}}_1\\
		\hat{\mathbf{E}}_2\\
		\hat{\mathbf{E}}_3
	\end{bmatrix}
	=\begin{bmatrix}
		1 & 0 & 0\\
		0 & \frac{1}{R} & 0\\
		0 & -\psi(R) & 1
	\end{bmatrix}
	\begin{bmatrix}
		\partial_R\\
		\partial_\Theta\\
		\partial_Z
	\end{bmatrix}\,,
\end{equation}
%-----------------------------
with the orthonormalization transformation $\Lt$ given by
%-----------------------------
\begin{equation}
	\left\llbracket\cL_A{}^B\right\rrbracket
	=\begin{bmatrix}
		1 & 0 & 0\\
		0 & \frac{1}{R} & 0\\
		0 & -\psi(R) & 1
	\end{bmatrix}\,,
	\quad\text{or equivalently}\quad
	\left\llbracket\big(\cL^{-1}\big)_B{}^A\right\rrbracket
	= \begin{bmatrix}
		1 & 0 & 0\\
		0 & R & 0\\
		0 & R\,\psi(R) & 1
	\end{bmatrix}\,.
\end{equation}
%-----------------------------

Assuming an isotropic material, one may take the following deformation ansatz
${r=r(R)}$, ${\theta = \Theta + \gamma(Z)}$, ${z=\lambda^2 Z}$
with azimuthal shear $\gamma=\gamma(Z)$ and axial stretch $\lambda^2$---both dimensionless. The deformation gradient and the elastic distortion may hence be given by the following representations
%-----------------------------
\begin{equation}
	\mathbf{F}=
	\begin{bmatrix}
		r'(R) & 0 & 0\\
		0 & 1 & \gamma'(Z)\\
		0 & 0 & \lambda^2
	\end{bmatrix}\,.
\qquad
	\Fe=
	\begin{bmatrix}
		r'(R) & 0 & 0 \\[1mm]
		0 & 1 & \gamma'(Z)-\psi(R) \\[2mm]
		0 & 0 & \lambda^2
	\end{bmatrix}\,.
\end{equation}
%-----------------------------

%-----------------------------
\paragraph{Physical components.}
This example realizes a case of practical interest of \S\ref{Subsection:Phys-Anelasticity}: the referential chart is adapted to $\Go$ but not to $\G$. Accordingly, $\Fe$ reduces on both legs by \eqref{eq:phys-orth-anel}, whereas the referential leg of $\F$ retains the non-diagonal $\Lt$, and $\Fa$ takes $\Lt$ on its domain and $\Lo$ on its codomain. From \eqref{eq:Fe-phys-general}--\eqref{eq:F-phys-general} one obtains
%-----------------------------
\begin{equation}
\label{eq:Twist_Physi_eigen}
	\hat{\F}=\hat{\Fe}=
	\begin{bmatrix}
		r'(R) & 0 & 0 \\[1mm]
		0 & \dfrac{r(R)}{R} & r(R)\left(\gamma'(Z)-\psi(R)\right) \\[2mm]
		0 & 0 & \lambda^2
	\end{bmatrix}\,,
	\quad
	\hat{\Fa}=\llbracket\delta^A_B\rrbracket\,.
\end{equation}
%-----------------------------
The anelastic distortion is thus trivialized by Cartan's construction \eqref{Cartan-Moving-Twist}: it delivers the $\G$-orthonormal frame given by the $\Fa$-image of $\Go$---precisely the gauge of Remark~\ref{remark:split-consistency}---and yields $\hat{\F}=\hat{\Fe}$ by \eqref{eq:Fhat-composition}. The two representations coincide under the chosen physical component orthonormalization while $\F\neq\Fe$ as two-point tensors, their referential legs being attached to different configurations. Recall that the anelastic content is not thereby lost: it enters $\hat{\F}$ through $\Lt$, and hence through $\G=\Fa^*\Go$.
The mismatch between the realized twist and the prescribed eigentwist appears in the shear entry alone, while the bulk gives $\det\llbracket\hat{\cF}^a{}_A\rrbracket=\lambda^2\,r(R)\,r'(R)/R=J$ in accordance with Proposition~\ref{prop:Jacobian}. The coordinate components carry mixed physical dimensions, $[\cF^\theta{}_Z]=[\gamma']=L^{-1}$ while $\cF^r{}_R$ is dimensionless, whereas every entry of $\hat{\F}$ is dimensionless.
%-----------------------------
%---------------------------------
\begin{remark}[Comparison with Aris's construction]
\label{remark:Aris-twist}
As discussed in Remark~\ref{Remark:Aris}, \citet{Aris1962} normalizes the coordinate basis vectorsbut does not orthogonalize them. Here the referential chart is $\G$-non-orthogonal, so that $\hat{\mathbf E}_{(A)}=\partial_A/\sqrt{\cG_{AA}}$ is of unit length but non-orthogonal. Writing $S(R)=\sqrt{\cG_{ZZ}(R)}=\sqrt{1+R^2\psi^2(R)}$, the resulting components read
%---------------------------------
\begin{subequations}
\begin{alignat}{2}
	\label{eq:Twist_F_Aris}
	\llbracket\hat{\cF}^a{}_{(A)}\rrbracket
	&=\begin{bmatrix}
		r'(R) & 0 & 0\\[1mm]
		0 & r(R)/R & r(R)\,\gamma'(Z)/S(R)\\[1mm]
		0 & 0 & \lambda^2/S(R)
	\end{bmatrix}\,, \quad&
	\det\llbracket\hat{\cF}^a{}_{(A)}\rrbracket&=J/S(R)\,,\\
	\label{eq:Twist_Fe_Aris}
	\llbracket\hat{\cFe}^a{}_{(A)}\rrbracket
	&=\begin{bmatrix}
		r'(R) & 0 & 0\\[1mm]
		0 & r(R)/R & r(R)\left(\gamma'(Z)-\psi(R)\right)\\[1mm]
		0 & 0 & \lambda^2
	\end{bmatrix}\,, \quad&
	\det\llbracket\hat{\cFe}^a{}_{(A)}\rrbracket&=J\,,\\
	\label{eq:Twist_Fa_Aris}
	\llbracket\hat{\cFa}^A{}_{(B)}\rrbracket
	&=\begin{bmatrix}
		1 & 0 & 0\\[1mm]
		0 & 1 & R\,\psi(R)/S(R)\\[1mm]
		0 & 0 & 1/S(R)
	\end{bmatrix}\,, \quad&
	\det\llbracket\hat{\cFa}^A{}_{(B)}\rrbracket&=1/S(R)\,.
\end{alignat}
\end{subequations}
%---------------------------------
In this example, Aris's normalization differs from the framework advanced in this work only on the leg measured against $\G$, the remaining legs being referred to the already orthogonal (and Euclidean) $\Go$ and $\g$. Of the three distortions, only $\Fe$ carries no material leg, and Aris's construction accordingly reproduces \eqref{eq:Twist_Physi_eigen} for it alone---since, otherwise, $\F$ is sitting on $\G$ and $\g$ and $\Fa$ on $\G$ and $\Go$. The composition $\hat{\Fe}\,\hat{\Fa}=\hat{\F}$ is nonetheless preserved, as the factors adopted on the material leg cancel between the two distortions.
The Jacobian is recovered as the determinant of $\hat{\Fe}$ but not as that of $\hat{\F}$, contrary to Proposition~\ref{prop:Jacobian}.
Further, $\llbracket\hat{\cFa}^A{}_{(B)}\rrbracket$ is not orthogonal, although $\Fa$ is an isometry from $(\mathcal{B},\G)$ onto $(\mathcal{B},\Go)$.
Finally, we observe that the twist mismatch $\gamma'(Z)-\psi(R)$ no longer features in \eqref{eq:Twist_F_Aris}, the off-diagonal entry reading $r(R)\gamma'(Z)/\sqrt{1+R^2\psi^2(R)}$, with the applied azimuthal shear in the numerator and the eigentwist only in the normalization factor $\sqrt{\cG_{ZZ}(R)}=\sqrt{1+R^2\psi^2(R)}$. Yet it is this very mismatch that sources the residual stress in this problem.
\end{remark}
%---------------------------------

%---------------------------
\section{Conclusions}\label{Section:Conclusions}
Physical components have long been used in continuum mechanics and physics as quantities that are presumed to correspond more directly to physical measurements than coordinate components. In orthogonal coordinate systems, the classical construction is unambiguous: normalization of the coordinate basis yields an orthonormal frame, and physical components are obtained by expressing tensor fields relative to this frame and its dual coframe. In non-orthogonal coordinates, however, normalization of the individual coordinate directions does not produce an orthonormal frame, and no analogous canonical construction exists.

In this paper, physical components are formulated using orthonormal frames on a Riemannian manifold. Starting from an arbitrary coordinate basis, an orthonormal frame may be constructed by orthonormalization, and physical components are defined as the components of tensor fields relative to this frame and its dual coframe. The orthonormal frame is not unique: different choices are related by local orthogonal transformations, and the corresponding physical components therefore possess an orthogonal gauge freedom. This framework recovers the classical formulae in orthogonal coordinates while remaining applicable in arbitrary coordinate systems.
The construction also extends to two-point tensors, for which the domain and codomain legs must be orthonormalized with respect to their corresponding metrics and admit independent orthogonal gauge transformations.

A general normalization framework is introduced in order to identify the precise assumptions underlying the orthonormal-frame construction. Three requirements are distinguished: dimensional consistency of the normalized components, dual compatibility of the frame and coframe, and unit normalization of both. Dimensional consistency alone leaves independent dimensionless general linear gauge freedoms for the contravariant and covariant indices. Dual compatibility locks these two freedoms into a single general linear gauge freedom by requiring the normalized frame and coframe to arise as a dual pair. It was then shown that a dual-compatible frame and coframe can both have unit length if and only if they are orthonormal. Consequently, imposing all three requirements reduces the general linear ambiguity precisely to the local orthogonal gauge freedom of the orthonormal-frame construction.

This characterization shows that the orthogonal gauge freedom is not an artefact of a particular orthonormalization procedure. Rather, it is the residual freedom left undetermined by dimensional consistency, dual compatibility, and unit normalization. 
Thus, up to a local orthogonal transformation, the orthonormal-frame construction is the only one that yields components with the same physical dimensions as the tensor field while preserving dual compatibility between contravariant and covariant representations. It is also the only construction in which both the frame and its dual coframe are unit normalized.
For two-point tensors, the corresponding freedom consists of independent orthogonal transformations of the orthonormal frames associated with the domain and codomain metrics.

Our analysis also clarifies the status of alternative constructions proposed in the literature. In particular, \citet{Aris1962} defined physical components relative to individually normalized coordinate directions rather than an orthonormal frame. In non-orthogonal coordinates, the frame and coframe generated by this construction are related by the metric-dual rather than by natural duality. 
The construction therefore satisfies dimensional consistency and unit normalization but not dual compatibility.
Its dual and mirror variants exhibit the same obstruction: each satisfies two of the three requirements, but none satisfies all three unless the coordinate directions are orthogonal. Such constructions are generally not related to the orthonormal-frame construction by orthogonal transformations. Hence, if physical components are understood merely as dimensionally normalized components, their non-uniqueness is broader than orthonormal-frame freedom and carries a general linear gauge character.

The framework developed in this paper is specialized to nonlinear elasticity and to three classes of inelastic response: finite anelasticity, finite viscoelasticity, and finite visco-anelasticity.% in which anelastic distortion and viscous relaxation are both present \citep{SaYa2026visco}.
Physical components of the deformation gradient, inelastic distortions, strain measures, and stress tensors are obtained by expressing each tensorial leg relative to an orthonormal frame associated with the metric on its corresponding space. This distinction is particularly important in beyond elasticity, where the material, Euclidean reference, and spatial metrics need not coincide. The example involving distributed eigentwists illustrates explicitly how the resulting physical components differ from coordinate components and how the relevant metric enters each construction.

These results clarify earlier observations concerning the ambiguity of physical components in non-orthogonal coordinates and place them within a broader geometric framework. Physical components are useful, dimensionally consistent, and normalized representations of tensorial quantities, but they are not intrinsic objects. In general, one should speak of \emph{a} set of physical components associated with specified metrics and chosen frames, rather than \emph{the} physical components of a tensor field.

%%---------------------------
%%------------------------------
%\section*{Acknowledgement}

%-------------------------
%----------------------------
\bibliographystyle{abbrvnat}
\bibliography{Phys_Comp}
\end{document}